\documentclass[12pt]{l4dc2022} 

\title[Sampling-based reachability analysis]{A Simple and Efficient Sampling-based Algorithm\\ for General Reachability Analysis}%

\usepackage{times}

\usepackage{url} 
\usepackage{wrapfig}

\usepackage{hyperref}
\usepackage{graphicx}
\usepackage{amsmath}
\usepackage{bm}
\usepackage{mathrsfs}

\def\Inf{\operatornamewithlimits{inf\vphantom{p}}}

\usepackage[font={small,it}]{caption}




\usepackage{color}

\usepackage{cleveref} 

\usepackage{xspace}

\usepackage{enumitem} 

\usepackage{cite} 

\usepackage{amssymb}
\usepackage{algorithm}
\makeatletter
\renewcommand*{\ALG@name}{Alg.}
\makeatother
\usepackage[noend]{algpseudocode}


\newtheorem{thm}{Theorem}

\newtheorem{cor}{Corollary}




\newcommand\rev[1]{#1}


\newcommand\mydots{\hbox to 1em{.\hss.\hss.}}


\newcommand{\comp}{\mathsf{c}} 
\newcommand{\dd}{\textrm{d}} 
\newcommand{\hull}{\textrm{H}} 
\newcommand{\Int}{\textrm{Int}} 
 

 




\newcommand{\Prob}{\mathbb{P}}

\newcommand{\K}{\mathcal{K}} 







\newcommand{\A}{\mathcal{A}}
\newcommand{\G}{\mathcal{G}}

\newcommand{\B}{\mathcal{B}}

\newcommand{\X}{\mathcal{X}}
\newcommand{\Y}{\mathcal{Y}}

\newcommand{\Xsafe}{\mathcal{X}_{\text{free}}}
\newcommand{\Xgoal}{\mathcal{X}_{\text{goal}}}
\newcommand{\U}{\mathcal{U}}

\newcommand{\R}{\mathbb{R}}

\newcommand{\pinn}{\pi_{\textrm{nn}}}

\newcommand{\randup}{\textsc{RandUP}\xspace}
\newcommand{\gotube}{\textsc{GoTube}\xspace}
\newcommand{\reachsdp}{\textsc{ReachSDP}\xspace}
\newcommand{\reachlp}{\textsc{ReachLP}\xspace}



  



 


\newtheorem{myassumption}{Assumption}


\usepackage{multirow}


\newcommand\SmallMatrix[1]{{%
  \tiny\arraycolsep=0.5\arraycolsep\ensuremath{\begin{bmatrix}#1\end{bmatrix}}}}

\author{%
 \Name{Thomas Lew}$^1$ \Email{thomas.lew@stanford.edu}
 \AND
 \Name{Lucas Janson}$^2$ \Email{ljanson@fas.harvard.edu}
 \AND
 \Name{Riccardo Bonalli}$^3$ \Email{riccardo.bonalli@l2s.centralesupelec.fr}
 \AND
 \Name{Marco Pavone}$^1$ \Email{pavone@stanford.edu}\\[1mm]
 \addr $^1$Department of Aeronautics and Astronautics, Stanford University\\
 \addr $^2$Department of Statistics, Harvard University\\
 \addr $^3$Laboratory of Signals and Systems, University of Paris-Saclay, CNRS, CentraleSupélec%
\vspace{-2mm}
}

\begin{document}

\maketitle

\vspace{-1mm}

\begin{abstract}
In this work, we analyze an efficient sampling-based algorithm for general-purpose reachability analysis, which remains a notoriously challenging problem with applications ranging from neural network verification to safety analysis of dynamical systems. 
By sampling inputs, evaluating their images in the true reachable set, and taking their $\epsilon$-padded convex hull as a set estimator, this algorithm applies to general problem settings and is simple to implement. 
Our main contribution is the derivation of asymptotic and finite-sample accuracy guarantees using random set theory. 
This analysis informs algorithmic design to obtain an $\epsilon$-close reachable set approximation with high probability, 
provides insights into which reachability problems are most challenging, 
and motivates safety-critical applications of the technique. %
On a neural network verification task, we show that %
this approach is more accurate and significantly faster than prior work. 
Informed by our analysis, we also design a robust model predictive controller that we demonstrate in hardware experiments.
\end{abstract}

\begin{keywords}%
reachability analysis, random set theory, robust control, neural network verification.
\end{keywords}


\section{Introduction}\label{sec:intro}

\begin{wrapfigure}{!R}{0.44\linewidth}
	\begin{minipage}{0.95\linewidth}
	\vspace{-12.5mm}
\includegraphics[width=1\linewidth,trim=0 0 0 120, clip]{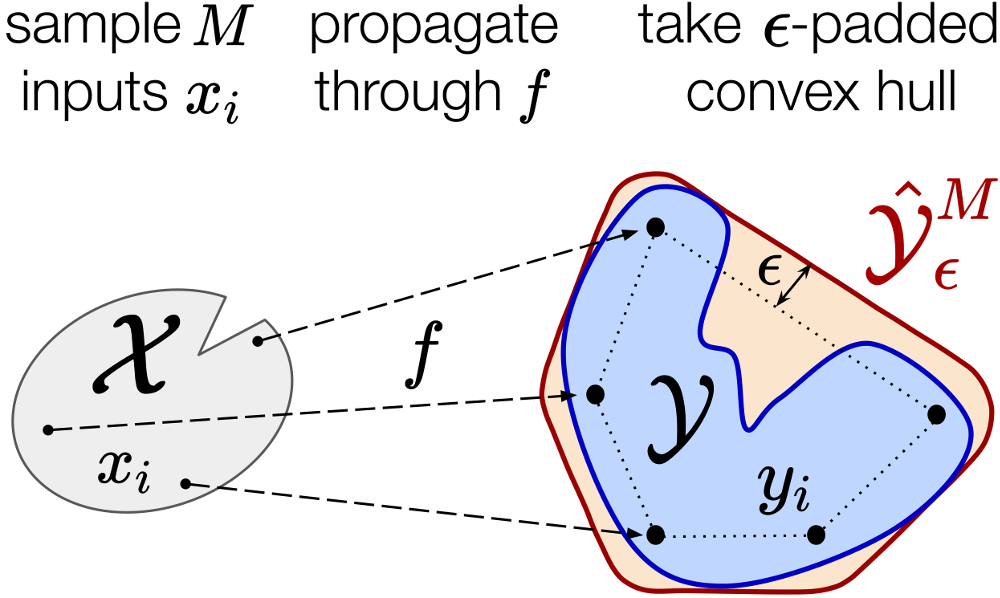}
    \vspace{-5mm}
	\caption{$\epsilon$-\randup consists of three simple steps: 1) sampling $M$ inputs $x_i$ in $\X$, 2) propagating these inputs through the reachability map $f$, and 3) taking the $\epsilon$-padded convex hull $\hat\Y_\epsilon^M$ to approximate the reachable set $\Y$. %
	}\label{fig:randup}
	\vspace{-5mm}
	\end{minipage}%
	\end{wrapfigure}

Forward reachability analysis entails characterizing the reachable set of outputs of a given function corresponding to a set of inputs.
This type of analysis underpins a plethora of applications in model predictive control, neural network verification, and safety analysis of dynamical systems. 
Sampling-based reachability analysis techniques are a particularly simple class of methods to implement; however, conventional wisdom suggests that if insufficient representative samples %
are considered, these methods may not be robust in that they cannot rule out edge cases missed by the sampling procedure. 
Alternatively, by leveraging structure in specific problem formulations or computational methods designed for exhaustivity (e.g., branch and bound), a large range of algorithms with deterministic accuracy and performance guarantees have been developed. However, these methods often sacrifice simplicity and generality for their power, motivating the development of algorithms that avoid such restrictions.

In this work, we analyze a simple yet efficient sampling-based algorithm for general-purpose reachability analysis. 
As depicted in Figure \ref{fig:randup}, it consists of 1) sampling inputs, 2) propagating these inputs, and 3) taking the padded convex hull of these output samples.   
We refer to this \textsc{Rand}omized \textsc{U}ncertainty \textsc{P}ropagation algorithm as $\epsilon$-\randup: it is simple to implement, 
benefits from statistical accuracy guarantees, 
and
applies to a wide range of problems including reachability analysis of uncertain dynamical systems with neural network controllers. 
Importantly, $\epsilon$-\randup fulfills key desiderata that a general-purpose reachability analysis algorithm should satisfy:
\begin{itemize}[leftmargin=5mm]
\setlength\itemsep{0mm}
\vspace{-1mm}
\item it works with any choice of possibly nonlinear reachability maps and non-convex input sets,
\item its estimate of the reachable set is conservative with high probability and tighter than prior work,
\item it is efficient and does not require precomputations, which is a key advantage for learning-based control applications where uncertainty bounds and models are updated in real-time. 
\end{itemize}
Our main contribution is a thorough analysis of the statistical properties of $\epsilon$-\randup. 
Specifically:
\begin{enumerate}[leftmargin=5mm]
\setlength\itemsep{0mm}
\vspace{-1mm}
\item %
We prove that the %
set estimator converges to the $\epsilon$-padded convex hull of the true reachable set as the number of samples increases. %
Our assumption about the sampling distribution is weaker than in related work and implies that sampling the boundary of the input set is sufficient. %
This asymptotic result justifies using %
$\epsilon$-\randup %
as a thrustworthy baseline for offline validation whenever the reachability map and the input set are complex and no tractable algorithm %
exists. 

\item %
We derive a finite-sample bound for the Hausdorff distance between the output of $\epsilon$-\randup and the convex hull of the true reachable set, %
assuming that the reachability map is Lipschitz continuous. 
This result informs algorithmic design (e.g., how to choose the number of samples to obtain an $\epsilon$-accurate approximation with high probability), 
sheds insights into %
which %
problems are most challenging, 
and motivates using this simple algorithm in safety-critical applications. 
\end{enumerate}
\vspace{-1mm}
We demonstrate $\epsilon$-\randup on a neural network controller verification task and show that it is highly competitive with prior work. We also embed this algorithm within a robust model predictive controller and present hardware results demonstrating the reliability of the approach.

\vspace{-1mm}

\section{Related work}\label{sec:related_work}

Reachability analysis has found a wide range of applications ranging from model predictive control \citep{Schurmann2018}, 
robotics \citep{Shao2021,LewEtAl2022}, 
neural network verification \citep{Tran2019,Hu2020},  
to orbital mechanics \citep{Wittig2015}. 
Reachability analysis is particularly relevant in safety-critical applications which require the strict satisfaction of specifications. 
For instance, 
a drone transporting a package should never collide with obstacles and respect velocity bounds for any payload mass in a bounded input set. %
In contrast to stochastic problem formulations which typically consider the inputs as random variables with known probability distributions \citep{Webb2019,Sinha2020,DevonportL4DC2020}, 
we consider robust formulations which are of %
interest whenever minimal information about the inputs is available. 

Deterministic %
algorithms are often tailored to the particular parameterization of the reachability map and to the shape of the input set. For instance, one finds methods that are particularly designed for neural networks \citep{Tran2019,IvanovVerisig2019,Hu2020}, 
nonlinear hybrid systems \citep{Chen2013,Kong2015}, 
linear dynamical systems with zonotopic \citep{Girard2005} and ellipsoidal \citep{Kurzhanski2000} parameter sets, 
etc.  
We refer to \citep{Liu2021} and \citep{Althoff2021} for recent comprehensive surveys. 
Such algorithms %
have deterministic accuracy guarantees but require problem-specific structure that restricts the class of systems 
they apply to. 
Given the wide range of applications of reachability analysis, there is a pressing need for the development and analysis of simple algorithms that can be applied to general problem formulations. %

On the other hand, sampling-based algorithms reconstruct the reachable set %
from sampled outputs. %
The stochasticity is typically controlled by the engineer, who selects the number of samples and their distribution. 
A key strength of this methodology is the possible use of black-box models %
with arbitrary input sets, %
which allows using complex simulators of the system. 
For instance, kernel-based methods \citep{DeVito2014,Rudi2017,ThorpeL4DC2021} have been proposed as a strong approach for data-driven reachability analysis. Kernel-based methods are highly expressive, as selecting a completely separating kernel \citep{DeVito2014} enables reconstructing any closed set %
to arbitrary precision given enough samples. 
Their main drawback %
is the potentially expensive evaluation of %
the estimator for a large number of samples. %
Its implicit representation %
as a level set is also not particularly convenient for downstream applications. %

Sampling-based reachable set estimators with pre-specified shapes have been proposed to simplify computations and downstream applications. 
Recently, \citep{LewPavone2020} proposed to approximate %
reachable sets %
with the convex hull of the samples,  
but this approach is not guaranteed to return a conservative approximation. %
Ellipsoidal and rectangular sets are computed in \citep{DevonportL4DC2020} using the scenario approach, but this work tackles a different problem formulation with inputs that are random variables with known distribution.  
To tackle the robust reachability analyis problem setting, %
\citep{Gruenbacher2021} use a ball estimator %
that bounds the samples. %
The statistical analysis is restricted to ball-parameterized input sets, %
uniform sampling distributions, and smooth diffeomorphic reachability maps %
that represent the solution of a neural ordinary differential equation \citep{Chen2018} from the input set. %
In practice, using an outer-bounding ball %
is more conservative than taking the convex hull of the samples, see %
Section \ref{sec:results}.

In this work, we slightly modify \randup \citep{LewPavone2020} with an additional $\epsilon$-padding step to yield finite-sample outer-approximation guarantees,  %
Our analysis leverages  
random set theory \citep{Matheron1975,Molchanov_BookTheoryOfRandomSets2017}, which provides a natural mathematical framework to analyze the reachable set estimator. %
We characterize its accuracy using the Hausdorff distance to the convex hull of the true reachable set, which provides an intuitive error measure %
that can be directly used for downstream control applications. 
Our analysis draws inspiration from the vast literature on statistical geometric inference, which proposes different %
set estimators including  
union of balls \citep{Devroye1980,Baillo2001}, 
convex hulls \citep{RipleyPoissonForest1977,Schneider1988,Dumbgen1996}, 
$r$-convex hulls \citep{Rodriguez2016,RodriguezCasal2019,AriasCastro2019},
Delaunay complexes \citep{Boissonnat2013,AamariPhD2017,Aamari2018}, 
and kernel-based estimators \citep{DeVito2014,Rudi2017}. 
This research typically makes assumptions about the set to be reconstructed %
(e.g., %
it is convex \citep{Dumbgen1996} or has bounded reach \citep{Cuevas2009}) and considers points that are directly sampled from this set. 
In this work, we derive similar results for reachable sets 
given known properties of the 
input set,  
reachability map, and 
chosen input sampling distribution.

\vspace{-1mm}
\section{Problem definition}\label{sec:formal_setting}
\vspace{-1mm}
In this section, we introduce our notations and  problem formulation. %
Due to space constraints, we leave measure-theoretic details to Appendix \ref{apdx:random_set_theory}.  
We denote 
$\lambda(\cdot)$ for the Lebesgue measure over $\R^p$, 
$\Gamma(\cdot)$ for the %
gamma function, 
$\hull(A)$ for the convex hull of a subset $A\subset\R^n$, $A^\comp=\R^n\setminus A$ for its complement, 
$\partial A$ for its boundary, 
$\oplus$ for the Minkowski sum, 
$B(x,r)\,{:=}\,\{y\,{\in}\,\R^n{:}\, 
 \|y\,{-}\,x\|\,{\leq}\, r
\}$ for the closed ball of center $x\in\R^n$ and radius $r\geq 0$, 
and $\mathring{B}(x,r)$ for the open ball. 
The family of nonempty compact subsets of $\R^n$ is denoted as $\K$.  
For any $A\in\K$ and $d>0$, $D(A, d)\,{:=}\,\min\{n\,{\in}\,\mathbb{N} : 
\exists \{a_1,{\mydots},a_n\}\,{\subset}\,\R^n, \ 
A\,{\subset}\, B(a_1,d)\,{\cup}\,{\mydots}\,{\cup}\, B(a_n,d)
\}$ denotes the $d$-covering number of $A$. 

Let $\X\subset\R^p$ be a compact nonempty set of inputs and  
$f:\R^p\rightarrow\R^n$ be a continuous function. 
In this work, we tackle the general problem of reachability analysis, i.e., %
characterizing the set of reachable outputs $y=f(x)$ for all possible inputs $x\in\X$. 
This problem is also often referred to as uncertainty propagation. %
Mathematically, the objective consists of efficiently computing an accurate approximation of the reachable set $\Y\subset\R^n$, which is defined as 
\begin{align}
\label{eq:reach_set}
\Y = f(\X) =
\{ 
	f(x) \, :\, x\in\X
\}
.
\end{align}
To tackle this problem, 
$\epsilon$-\randup relies on the choice of three parameters: 
a number of samples $M\in\mathbb{N}$, 
a padding constant $\epsilon>0$, 
and 
a sampling distribution $\Prob_\X$ on measurable subsets of $\R^{p}$. 
As depicted in Figure \ref{fig:randup}, %
$\epsilon$-\randup consists of sampling $M$ independent identically-distributed inputs $x_i$ in $\X$ according to $\Prob_\X$, 
of evaluating each output $y_i=f(x_i)$, %
and 
of computing the $\epsilon$-padded convex hull 
\begin{equation}\label{eq:estimator_eps}
\hat\Y_\epsilon^M:=\hull\left(\{y_i\}_{i{=}1}^M\right)\oplus B(0,\epsilon).
\end{equation}
Our analysis hinges on the observation that the reachable set estimator $\hat\Y_\epsilon^M$ is a \textit{random compact set}, i.e., $\hat\Y_\epsilon^M$ is a random variable taking values in the family of nonempty compact sets $\K$.  
We refer to Appendix \ref{apdx:random_set_theory} for rigorous definitions using random set theory. 
Intuitively, different input samples $x_i$ in $\X$ induce different output samples $y_i$ in $\Y$, resulting in  different approximated reachable sets $\hat\Y_\epsilon^M$.  
To characterize the accuracy of
the estimator,
we use the \textit{Hausdorff metric}, which is defined as 
\begin{equation}\label{eq:metric:Hausdorf}
d_\textrm{H}(A,B)
:=
\max\big(
\sup_{x\in B}
\Inf_{y\in A}
\|x-y\|, \
\sup_{x\in A}
\Inf_{y\in B}
\|x-y\|
\big)
\quad
\text{for any $A,B\in\K$.}
\end{equation}
This metric induces a topology and an associated $\sigma$-algebra, which enables 
rigorously defining random compact sets as random variables and  describing their convergence; see Appendix \ref{apdx:random_set_theory}. 
Interestingly, the distribution of a random compact set is characterized by the probability that it intersects any given compact set. 
We use this fact in Sections \ref{sec:asymptotic} and \ref{sec:finite_sample}, where we %
characterize the probability that the set estimator $\hat\Y_\epsilon^M$ intersects well-chosen sets along the boundary of the true reachable set. 
By analyzing the distribution of $\hat\Y_\epsilon^M$, 
this approach allows bounding the Hausdorff distance between $\hat\Y_\epsilon^M$ and the convex hull of the true reachable set $\hull(\Y)$ with high probability.

\vspace{-1mm}

\section{Asymptotic analysis}\label{sec:asymptotic}

In this section, we provide an asymptotic analysis under minimal assumptions about the input set and the reachability map (namely, that $\X$ is compact and $f$ is continuous). %
To enable the reconstruction of the true convex hull $\hull(\Y)$ using the sampling-based set estimator $\hat\Y_\epsilon^M$, we make one assumption about the sampling distribution $\Prob_\X$ for the inputs $x_i$. Note that by definition, $\Prob_\X(\X)=1$.

\begin{myassumption}\label{assum:XBoundary:posMeasure} 
$\Prob_\X(\{x\in\X: f(x)\in\mathring{B}(y,r)\})>0$  
for all $y\in\partial\Y$ and all $r>0$.
\end{myassumption}
This assumption states that the probability of sampling an output arbitrarily close to any point on the boundary of the true reachable set is strictly positive. 
In other words, the boundary of the reachable set should be contained in the support of the distribution of the output samples $y_i$.  
Assumption \ref{assum:XBoundary:posMeasure} is weaker than the associated assumption in \citep[Theorem 2]{LewPavone2020}, which can be restated as ``\textit{$\Prob_\X(f^{-1}(A))>0$ for any open set $A\subset\R^n$ such that $\Y\cap A\neq \emptyset$}''. 
Indeed, Assumption \ref{assum:XBoundary:posMeasure} only considers open neighborhoods of the boundary $\partial\Y$, as opposed to all open sets intersecting $\Y$. 
Selecting a sampling distribution $\Prob_\X$ that satisfies Assumption \ref{assum:XBoundary:posMeasure} is easy. For instance, if $\X$ has a smooth boundary (see Assumption \ref{assum:Theta:r_convex}), then the uniform distribution over $\X$ satisfies Assumption \ref{assum:XBoundary:posMeasure}.

Assumption \ref{assum:XBoundary:posMeasure} is sufficient to prove that the random set estimator $\hat\Y_\epsilon^M$ converges to the $\epsilon$-padded convex hull of $\Y$ as the number of samples $M$ increases. 
Below, we prove a more general result which allows for variations of the padding radius $\epsilon$ as the number of samples increases. 

\begin{thm}[Asymptotic Convergence]\label{thm:asymptotic_convergence}
Let $\bar{\epsilon}\geq 0$ and 
$(\epsilon_M)_{M\in\mathbb{N}}$ be a sequence of padding radii such that $\epsilon_M\geq 0$ for all $M\in\mathbb{N}$ and $\epsilon_M\rightarrow \bar{\epsilon}$ as $M\rightarrow\infty$. 
For any $\epsilon\geq 0$, define the estimator $\hat\Y^M_{\epsilon}=\hull\left(\{y_i\}_{i{=}1}^M\right)\oplus B(0,\epsilon)$.  
Then, under Assumption \ref{assum:XBoundary:posMeasure}, 
almost surely, 
as $M\rightarrow\infty$, 
$$
d_H(
\hat\Y_{\epsilon_M}^M, 
\hull(\Y)\oplus B(0,\bar{\epsilon})
)
\mathop{\longrightarrow} 0.
$$
\end{thm}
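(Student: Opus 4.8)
The plan is to prove the two inclusions that together with $\epsilon_M \to \bar\epsilon$ yield $d_H(\hat\Y_{\epsilon_M}^M, \hull(\Y)\oplus B(0,\bar\epsilon)) \to 0$ almost surely. Write $H := \hull(\Y)$. First observe the easy (inner) direction: since $y_i = f(x_i) \in \Y \subset H$, we always have $\hull(\{y_i\}_{i=1}^M) \subset H$, hence $\hat\Y_{\epsilon_M}^M \subset H \oplus B(0,\epsilon_M)$. Combined with $\epsilon_M \to \bar\epsilon$, this controls the $\sup_{x \in \hat\Y_{\epsilon_M}^M}\Inf_{y \in H\oplus B(0,\bar\epsilon)}\|x-y\|$ term of the Hausdorff distance: any point of $\hat\Y_{\epsilon_M}^M$ is within $|\epsilon_M - \bar\epsilon|$ of $H\oplus B(0,\bar\epsilon)$, which goes to $0$ deterministically.

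The substantive (outer) direction is to show that almost surely $H \subset \hull(\{y_i\}_{i=1}^M) \oplus B(0,\delta_M)$ for some random $\delta_M \to 0$; equivalently $d_H(\hull(\{y_i\}_{i=1}^M), H) \to 0$ a.s. Here is where Assumption~\ref{assum:XBoundary:posMeasure} enters. The key step is to argue that the closure of the set of accumulation points of $(y_i)_{i\in\mathbb{N}}$ contains $\partial\Y$ almost surely. Fix a countable dense subset $\{z_k\}$ of $\partial\Y$ and, for each $k$ and each rational $r>0$, consider the event that infinitely many $y_i$ land in $\mathring B(z_k,r)$. By Assumption~\ref{assum:XBoundary:posMeasure}, $p_{k,r} := \Prob_\X(\{x : f(x)\in\mathring B(z_k,r)\}) > 0$, so by the second Borel--Cantelli lemma (the $x_i$ are i.i.d., hence the indicator events are independent with probability $p_{k,r}$), infinitely many $y_i$ fall in $\mathring B(z_k,r)$ almost surely. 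Intersecting over the countably many pairs $(k,r)$, almost surely every $z_k$ is a limit point of $(y_i)$, and by density every point of $\partial\Y$ is within any prescribed distance of some $y_i$ for $M$ large enough. Taking convex hulls, $\hull(\partial\Y) \subset \hull(\{y_i\}_{i=1}^M)\oplus B(0,\delta_M)$ with $\delta_M\to 0$; since $\Y$ is compact, $\hull(\partial\Y) = \hull(\Y) = H$ (every point of a compact set lies on a segment with endpoints on its boundary, so the convex hull of the boundary already contains the set). Combined with the inner inclusion $\hull(\{y_i\}_{i=1}^M)\subset H$, this gives $d_H(\hull(\{y_i\}_{i=1}^M), H) \leq \delta_M \to 0$ a.s.

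To finish, combine the two halves: by the triangle inequality for $d_H$ (and the Minkowski-sum estimate $d_H(A\oplus B(0,a), B\oplus B(0,b)) \leq d_H(A,B) + |a-b|$),
$$
d_H\big(\hat\Y_{\epsilon_M}^M,\, H\oplus B(0,\bar\epsilon)\big)
\;\leq\;
d_H\big(\hull(\{y_i\}),\, H\big) + |\epsilon_M - \bar\epsilon|
\;\leq\;
\delta_M + |\epsilon_M - \bar\epsilon|,
$$
and both terms vanish as $M\to\infty$ (the first a.s., the second deterministically), which proves the claim.

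The main obstacle is the outer inclusion, and within it the point requiring care is the passage from "$\partial\Y$ is covered by $\delta_M$-balls around the samples" to "$H$ is," i.e., the identity $\hull(\partial\Y) = \hull(\Y)$ for compact $\Y$ and the uniformity of $\delta_M$ across all of $\partial\Y$ (handled by compactness of $\partial\Y$ and a finite subcover argument applied to the dense countable net). One should also be slightly careful that $\partial\Y$ is nonempty and compact — which holds since $\Y = f(\X)$ is compact (continuous image of a compact set) and a nonempty compact proper subset of $\R^n$ has nonempty boundary; if $\Y$ is a single point the statement is trivial. Everything else is routine: Borel--Cantelli and the standard properties of the Hausdorff metric.
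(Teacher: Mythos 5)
Your proof is correct, and it takes a genuinely different route from the paper's. The paper proves the theorem by verifying the two hitting-probability conditions \eqref{eq:C1} and \eqref{eq:C2} of Molchanov's convergence criterion for random compact sets (Theorem \ref{thm:conv_randSets_detLim}): for each open $G$ meeting the limit set it constructs two boundary-intersecting neighborhoods $G_\partial^1,G_\partial^2$ and uses convexity of the estimator so that hitting both forces it to hit $G$, with the second Borel--Cantelli lemma supplying the almost-sure hitting. You instead establish Hausdorff convergence directly: the inner inclusion $\hull(\{y_i\}_{i=1}^M)\subseteq\hull(\Y)$ is immediate, and for the outer inclusion you apply the second Borel--Cantelli lemma over a countable dense net of $\partial\Y$ with rational radii, extract a finite subcover by compactness of $\partial\Y$, and pass from $\partial\Y$ to $\hull(\Y)$ via the identity $\hull(\partial\Y)=\hull(\Y)$ for compact sets (every interior point lies on a chord with endpoints in $\partial\Y$). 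Both proofs exploit Assumption \ref{assum:XBoundary:posMeasure} in the same way --- positive mass near every boundary point plus second Borel--Cantelli --- and both absorb the varying radii $\epsilon_M$ by a Minkowski-padding continuity estimate. What your version buys is self-containedness: it bypasses the random-set-theoretic machinery and the somewhat delicate choice of $G_\partial^1,G_\partial^2$, at the cost of needing the two facts you correctly flag and justify ($\hull(\partial\Y)=\hull(\Y)$, and uniformity of the covering radius over $\partial\Y$). Note also that because your covering argument covers $\partial\Y$ itself rather than $\partial\hull(\Y)$, it does not require convexity of $\Y$, so it sidesteps the limitation of the covering-style arguments (e.g., \citealp{Dumbgen1996}) that the paper contrasts itself with.
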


\begin{proof}
We refer to Appendix \ref{apdx:proof:thm:asymptotic_convergence}. 
We leverage \citep[Proposition 1.7.23]{Molchanov_BookTheoryOfRandomSets2017} which states sufficient conditions for the convergence of random compact sets %
and use properties of the convex hull to relax the corresponding assumption in \citep{LewPavone2020} with Assumption  \ref{assum:XBoundary:posMeasure}. 
\end{proof}
\vspace{-3mm}
Practically, Theorem \ref{thm:asymptotic_convergence} justifies using $\epsilon$-\randup for general continuous maps $f$ and compact sets $\X$. 
This consistency result implies that choosing any converging sequence of padding radii (e.g., $\epsilon_M=1/M$) guarantees the convergence of the random set estimator $\hat\Y_{\epsilon_M}^M$  to the $\bar{\epsilon}$-padded convex hull of the true reachable set. 
As a particular case, selecting a constant padding radius $\epsilon$ (which yields $\epsilon$-\randup) guarantees that $\Y_{\epsilon}^M$ converges to the $\epsilon$-padded convex hull $\hull(\Y)\oplus B(0,\epsilon)$.

Compared to \citep[Theorem 2]{LewPavone2020}, which only treats the case with constant zero padding radii $\epsilon_M=\bar\epsilon=0$ (i.e., without $\epsilon$-padding the convex hull of the output samples), Theorem \ref{thm:asymptotic_convergence} allows for variations of the padding radii $\epsilon_M$ and is proved under weaker assumptions. 
Instead of relying on %
$\epsilon$-covering arguments (e.g., see Corollary 1 in \citep{Dumbgen1996} which %
assumes that $\Y$ is convex), we use \citep[Proposition 1.7.23]{Molchanov_BookTheoryOfRandomSets2017}  to conclude asymptotic convergence. 
This proof technique allows deriving a general result that does not depend on the exact sampling density along the boundary $\partial\Y$ and uses a sequence of padding radii $\epsilon_M$ converging arbitrarily slowly to some constant $\bar{\epsilon}\geq 0$. 

\section{Finite-sample analysis}\label{sec:finite_sample}

Theorem \ref{thm:asymptotic_convergence} provides asymptotic convergence guarantees that support the application of $\epsilon$-\randup in general scenarios (e.g., as a baseline for offline validation in complex problem settings), but does not provide finite-sample guarantees which are of practical interest in safety-critical applications. 
Deriving stronger statistical guarantees requires leveraging more information about the structure of the problem. We derive finite-sample rates under general assumptions in Section \ref{sec:finite_sample:general} and analyze a particular case in Section \ref{sec:finite_sample:rconvex}. 
We discuss practical implications of our results in Section \ref{sec:finite_sample:insights}.

\subsection{General finite-sample statistical guarantees}\label{sec:finite_sample:general} 

To derive convergence rates and outer-approximation guarantees given a finite number of samples $M$, 
we first make an assumption   
about the smoothness of the reachability map $f$.

\begin{myassumption}\label{assum:f:lipschitz}
The reachability map $f:\R^p\rightarrow\R^n$ is $L$-Lipschitz: for some constant $L\geq 0$, 
$\|f(x_1)-f(x_2)\|
\leq
L\,\|x_1-x_2\|\  \
\text{for all }\ x_1,x_2\in\X$.
\end{myassumption}
Next, we make an assumption about the sampling distribution $\Prob_\X$ along the input set boundary $\partial\X$.

\begin{myassumption}\label{assum:sampling_density}
Given $\epsilon,L\,{>}\,0$, there exists $\Lambda_{\epsilon}^{L}\,{>}\,0$ such that $\Prob_\X\left(B\left(x,\frac{\epsilon}{2L}\right)\right)\,{\geq}\, \Lambda_{\epsilon}^{L}$ for all  
	$x\in\partial\X$.
\end{myassumption}
Given any boundary input $x\in\partial\X$, 
the constant $\Lambda_{\epsilon}^{L}$ characterizes the probability of sampling an input $x_i$ that is $\epsilon/(2L)$-close to $x$. 
Selecting a sampling distribution that satisfies Assumption \ref{assum:sampling_density} is simple; we provide examples in Sections \ref{sec:finite_sample:rconvex} and \ref{sec:results}. 
As we show next, these two assumptions are sufficient to derive finite-sample convergence rates for $\epsilon$-\randup. 
Recall that $D(\partial\X, d)$ denotes the $d$-packing number of $\partial\X$, which is necessarily finite by the compactness of $\X$.

\begin{thm}[Finite-Sample Bound] \label{thm:conservative_finite_sample} 
Define the estimator $\hat\Y^M=\hull\left(\{y_i\}_{i{=}1}^M\right)$ and the probability threshold
$
\delta_M= 
D(\partial\X,\epsilon/(2L))(1 -
	\Lambda_{\epsilon}^{L}
)^M
$.
Then, under Assumptions \ref{assum:f:lipschitz} and \ref{assum:sampling_density} 
\rev{and assuming that $\partial\Y\subseteq f(\partial\X)$}, with probability at least $1-\delta_M$, 
$$
d_H(
\hat\Y^M, 
\hull(\Y)
)\leq \epsilon
\quad\text{and}\quad
\Y\subseteq\hat\Y_\epsilon^M. 
$$
\end{thm}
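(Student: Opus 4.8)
The plan is to control the two-sided Hausdorff distance between $\hat\Y^M=\hull(\{y_i\})$ and $\hull(\Y)$ by handling each direction separately, then deduce the inclusion $\Y\subseteq\hat\Y_\epsilon^M$ as an immediate corollary. The easy direction is $\sup_{y\in\hat\Y^M}\Inf_{z\in\hull(\Y)}\|y-z\|=0$: every sample $y_i=f(x_i)$ lies in $\Y\subseteq\hull(\Y)$, and since $\hull(\Y)$ is convex it contains $\hull(\{y_i\}_{i=1}^M)$, so the one-sided distance from $\hat\Y^M$ to $\hull(\Y)$ vanishes deterministically. All the work is in the other direction: showing that with probability at least $1-\delta_M$, every point of $\hull(\Y)$ lies within $\epsilon$ of $\hat\Y^M$.

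First I would reduce ``$\hull(\Y)$ is $\epsilon$-covered by $\hat\Y^M$'' to ``$\partial\Y$ is $\epsilon$-covered by $\hat\Y^M$'': if $\hat\Y^M\oplus B(0,\epsilon)$ contains $\partial\Y$, then it is a convex set containing $\partial\Y$, hence by a standard convexity argument (a convex set containing the boundary of a compact set contains the set, since $\hull(\partial\Y)\supseteq\Y$ when $\Y$ is, e.g., connected — and in general $\hull(\partial\Y)=\hull(\Y)$ for compact $\Y$) it contains $\hull(\Y)$. This is where the hypothesis $\partial\Y\subseteq f(\partial\X)$ enters: it lets me pull the covering problem back to the input set, covering $\partial\Y$ by covering $\partial\X$. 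Concretely, take a minimal $\tfrac{\epsilon}{2L}$-cover of $\partial\X$ by $D(\partial\X,\epsilon/(2L))$ balls $B(c_j,\epsilon/(2L))$, $j=1,\dots,D(\partial\X,\epsilon/(2L))$. The ``bad event'' is that some ball $B(c_j,\epsilon/(2L))$ receives no sample among $x_1,\dots,x_M$. Since $c_j\in\partial\X$ (or can be taken on $\partial\X$ after enlarging radius slightly), Assumption~\ref{assum:sampling_density} gives $\Prob_\X(B(c_j,\epsilon/(2L)))\geq\Lambda_\epsilon^L$, so the probability that $B(c_j,\epsilon/(2L))$ is missed by all $M$ i.i.d.\ samples is at most $(1-\Lambda_\epsilon^L)^M$; a union bound over the $D(\partial\X,\epsilon/(2L))$ balls gives that the bad event has probability at most $\delta_M$.

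On the complementary event (probability $\geq 1-\delta_M$), every cover ball contains at least one sample. Now take any $y\in\partial\Y$; by $\partial\Y\subseteq f(\partial\X)$ write $y=f(x)$ with $x\in\partial\X$, pick $j$ with $x\in B(c_j,\epsilon/(2L))$ and a sample $x_i\in B(c_j,\epsilon/(2L))$, so $\|x-x_i\|\leq\epsilon/L$; then $L$-Lipschitzness (Assumption~\ref{assum:f:lipschitz}) gives $\|y-y_i\|=\|f(x)-f(x_i)\|\leq L\|x-x_i\|\leq\epsilon$, i.e.\ $y\in\hat\Y^M\oplus B(0,\epsilon)=\hat\Y_\epsilon^M$. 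Thus $\partial\Y\subseteq\hat\Y_\epsilon^M$, and by the convexity reduction above $\hull(\Y)\subseteq\hat\Y^M\oplus B(0,\epsilon)$, which is exactly $\sup_{z\in\hull(\Y)}\Inf_{y\in\hat\Y^M}\|z-y\|\leq\epsilon$; combined with the deterministic zero bound in the other direction this yields $d_H(\hat\Y^M,\hull(\Y))\leq\epsilon$. Finally $\Y\subseteq\hull(\Y)\subseteq\hat\Y^M\oplus B(0,\epsilon)=\hat\Y_\epsilon^M$ gives the second claim on the same event.

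The main obstacle is the convexity reduction step — making precise that covering $\partial\Y$ within $\epsilon$ suffices to cover $\hull(\Y)$ within $\epsilon$. The clean statement is that for compact $\Y$, $\hull(\partial\Y)=\hull(\Y)$ (every extreme point of $\hull(\Y)$ lies in $\Y$, and in fact on $\partial\Y$), so any convex set $C\supseteq\partial\Y$ satisfies $C\supseteq\hull(\partial\Y)=\hull(\Y)$; applying this with $C=\hat\Y^M\oplus B(0,\epsilon)$ (which is convex, being a Minkowski sum of convex sets) closes the argument. A minor care point is ensuring the cover centers $c_j$ can be taken in $\partial\X$ so that Assumption~\ref{assum:sampling_density} applies verbatim; this is handled by either assuming the cover is centered on $\partial\X$ or by a harmless radius adjustment absorbed into constants. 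Everything else is a routine union bound plus Lipschitz estimate.
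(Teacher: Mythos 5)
Your proof is correct and arrives at exactly the paper's bound, but it reorganizes the argument in a way worth noting. The shared skeleton is: cover the boundary, union-bound over the covering, transfer between input and output space via the Lipschitz constant, and pass from $\partial\Y$ to $\hull(\Y)$ using $\hull(\partial\Y)=\hull(\Y)$ --- the paper's Lemma \ref{lem:Hausdorff_hulls} is precisely your convexity reduction, and you justify the identity $\hull(\partial\Y)=\hull(\Y)$ more explicitly (via extreme points) than the paper does. The difference is where the covering lives: the paper covers $\partial\Y$ in the \emph{output} space with an $\epsilon/2$-covering, introduces the quantity $\pi(\partial\Y,Y^M_{\epsilon/2})=\sup_{y\in\partial\Y}\Prob\big(B(y,\epsilon/2)\cap\{y_i\}_{i=1}^M=\emptyset\big)$, and then needs two transfer lemmas --- $D(\partial\Y,\epsilon/2)\le D(\partial\X,\epsilon/(2L))$ (Lemma \ref{lem:dX_packing_nb_Lip}, where $\partial\Y\subseteq f(\partial\X)$ first enters) and $\Prob_\Y(B(f(x),\epsilon/2))\ge\Prob_\X(B(x,\epsilon/(2L)))$ (Lemma \ref{lem:Xball_coverage_prob}) --- to express everything in terms of input-space constants. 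You instead place the $\epsilon/(2L)$-covering directly on $\partial\X$, declare the bad event to be that some covering ball receives no sample, and push forward through $f$ only once at the end; this collapses the paper's Lemmas \ref{lem:dX_packing_nb_Lip}--\ref{lem:Xhulleps_conservative_geom} into a single triangle-inequality-plus-Lipschitz step and makes the provenance of the constants $D(\partial\X,\epsilon/(2L))$ and $\Lambda^L_\epsilon$ immediate. The one point both arguments elide in the same way is that a minimal covering of $\partial\X$ must be taken with centers on $\partial\X$ for Assumption \ref{assum:sampling_density} to apply to the covering balls; the paper assumes this silently inside Lemma \ref{lem:dX_packing_nb_Lip}, and you flag it explicitly, so no gap results.
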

\begin{proof}
We refer to Appendix \ref{apdx:proof:thm:conservative_finite_sample} for a complete proof.
\end{proof}
Using a similar analysis, 
one could derive convergence rates for the $\epsilon$-padded union of balls estimator %
\citep{Devroye1980,Baillo2001} that would depend on the $\epsilon$-covering number of the entire input set $D(\X,\epsilon)$. 
In the general case, %
$D(\partial\X,\epsilon)\leq D(\X,\epsilon)$: %
Theorem \ref{thm:conservative_finite_sample} indicates that 
using a convex hull is more sample-efficient than a union of balls \rev{(assuming that $\partial\Y\subseteq f(\partial\X)$, see Appendix \ref{apdx:proof:thm:conservative_finite_sample} for further details)}. It is better suited if $\Y$ is convex or if an approximation of $\hull(\Y)$ is sufficient for the downstream application, as is usual in control applications which typically use convex reachable set approximations, see \citep{LewPavone2020}.  

\subsection{Analysis of a particular setting: smooth input set and continuous distribution}\label{sec:finite_sample:rconvex}

	\begin{wrapfigure}{R}{0.33\linewidth}
	\begin{minipage}{0.95\linewidth}
    \centering
    \vspace{-1mm}
 \includegraphics[width=0.95\linewidth,trim=0 40 0 0, clip]{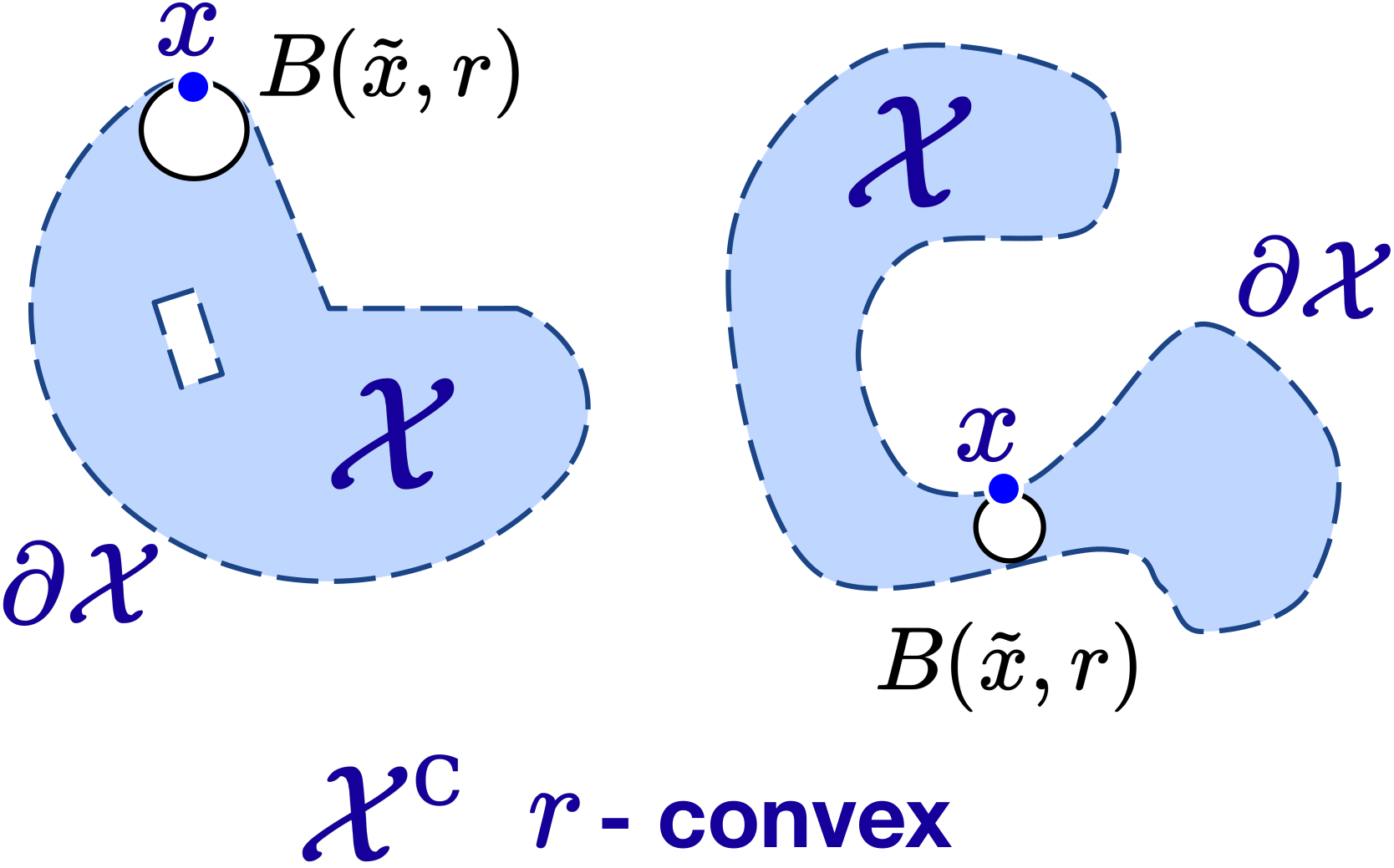}
 \centering
 \includegraphics[width=0.95\linewidth,trim=0 70 0 0, clip]{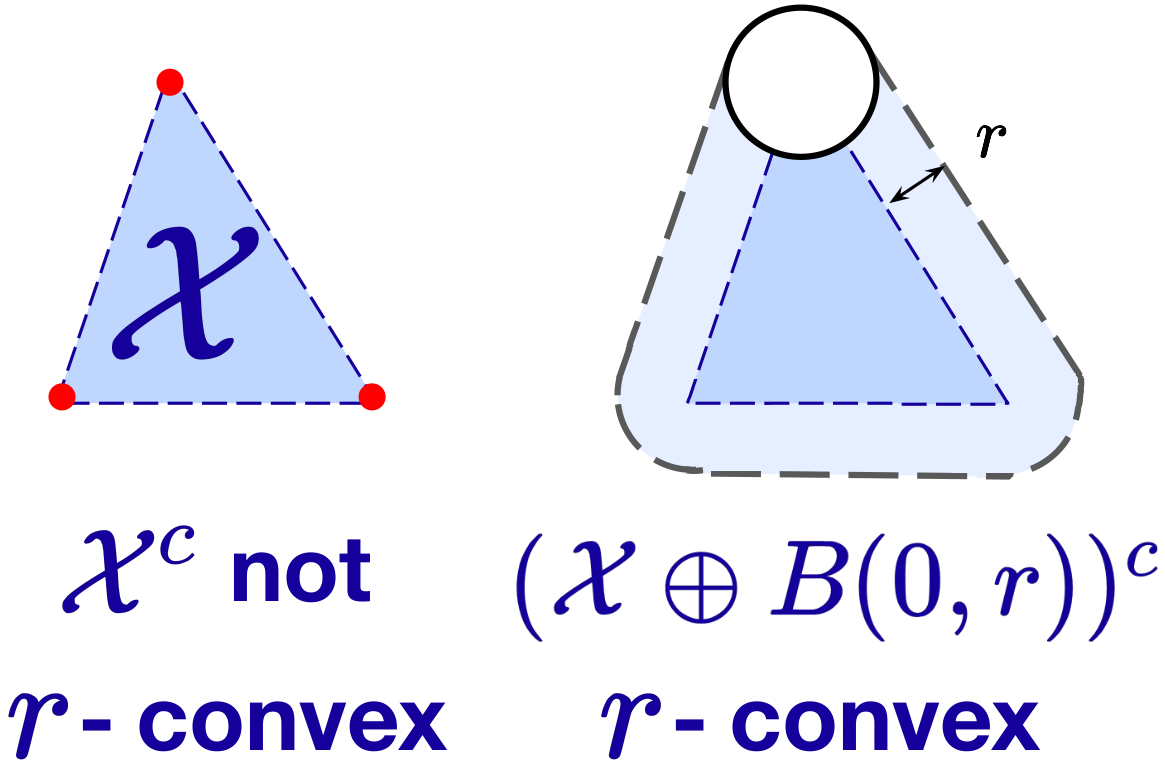}
	\caption{%
  \textbf{Top}: sets $\X$ satisfying Assumption \ref{assum:Theta:r_convex} can be non-convex, have holes, and be disconnected. 
  \textbf{Bottom}: if $\X^\comp$ is not $r$-convex, it is still possible to find a conservative approximation that is $r$-convex.}\label{fig:rconvex}
	\vspace{-10mm}
	\end{minipage}%
	\end{wrapfigure}
	
In many applications, the boundary of the input set is smooth (e.g., $\X$ is a $2$-norm ball). 
In this setting, we can apply Theorem \ref{thm:conservative_finite_sample} to derive finite-sample guarantees for general continuous sampling distributions. We state this smoothness assumption below.

\begin{myassumption}\label{assum:Theta:r_convex} 
$\X^\comp$ is $r$-convex for some $r>0$. Equivalently, 
for any $x\in\partial\X$, 
 there exists $\tilde{x}\in\X$ such that 
 $x\in B(\tilde{x},r)\subseteq \X$.
\end{myassumption}
Assumption \ref{assum:Theta:r_convex} guarantees that for any parameter $x$ on the boundary $\partial\X$, one can find a ball of radius $r$ contained in $\X$ that also contains $x$, see Figure \ref{fig:rconvex}. 
This assumption corresponds to a general inwards-curvature condition of the boundary $\partial\X$. 
It is a common assumption in the literature \citep{Walther1997,Rodriguez2016,RodriguezCasal2019,AriasCastro2019} and is related to the notion of reach \citep{Federer1959,Cuevas2009,AamariPhD2017} that bounds the curvature of the boundary $\partial\X$. 
To guarantee its satisfaction, 
one can replace $\X$ with $\X\oplus B(0,r)$ \citep{Walther1997} before performing reachability analysis, which would yield a more conservative estimate of $\Y$. %
Next, we state an assumption about the sampling distribution $\Prob_\X$.

 \begin{myassumption}\label{assum:sampling_density:cor}
$\Prob_\X(A)\geq p_0\lambda(A)$ for all  
 	measurable sets $A\subset\X$ for some constant $p_0>0$.
 \end{myassumption}
This assumption states that the sampling distribution %
admits a lower-bounded continuous density. 
 Specifically, there exists a density function $p_\X:\R^p\rightarrow\R_+$ such that $ 
\Prob_\X(A)
=
\int_{A} p_{\X}(x)\dd x\geq p_0\int_{A} \dd x=p_0\lambda(A)$ for any measurable subset $A\subset\X$. 
For instance, the uniform distribution over $\X$ satisfies this assumption. 
Similarly to Assumption \ref{assum:sampling_density}, this density assumption %
can be relaxed to neighborhoods %
of $\partial\X$; we leave this extension for future work.  
We obtain the following corollary.

\begin{cor}%
\label{cor:conservative_finite_sample} 
 Define the estimator $\hat\Y^M=\hull\left(\{y_i\}_{i{=}1}^M\right)$, 
 the offset vector 
 $\vec{r}=(r,0,\dots,0)\in\R^p$, 
 the volume $\Lambda_{\epsilon}^{r,L}=\lambda\big(
 B(0,\epsilon/(2L)) 
 \cap B(\vec{r},r)
 \big)$, 
 and the threshold
 $
 \delta_M= 
 D(\partial\X,\epsilon/(2L))\smash{(1 -
 	p_0 \Lambda_{\epsilon}^{r,L}
 )^M}
 $.
Then, under Assumptions \ref{assum:f:lipschitz},  \ref{assum:Theta:r_convex} and \ref{assum:sampling_density:cor} 
\rev{and assuming that $\partial\Y\subseteq f(\partial\X)$,}  
with probability at least $ 1-\delta_M$,
$$
d_H(
\hat\Y^M, 
\hull(\Y)
)\leq \epsilon
\quad \text{ and }\quad\ 
\Y\subseteq\hat\Y_\epsilon^M.
$$
\end{cor}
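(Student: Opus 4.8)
The plan is to derive Corollary~\ref{cor:conservative_finite_sample} as a direct application of Theorem~\ref{thm:conservative_finite_sample}. The only work is to verify that Assumptions~\ref{assum:f:lipschitz}, \ref{assum:Theta:r_convex}, and \ref{assum:sampling_density:cor} together imply Assumption~\ref{assum:sampling_density} with the explicit constant $\Lambda_\epsilon^L = p_0 \Lambda_\epsilon^{r,L}$; then the conclusion follows verbatim from the theorem with this substituted value of $\Lambda_\epsilon^L$ in $\delta_M$.

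Concretely, I would first fix an arbitrary boundary input $x\in\partial\X$. By Assumption~\ref{assum:Theta:r_convex} ($\X^\comp$ is $r$-convex), there exists $\tilde x\in\X$ with $x\in B(\tilde x, r)\subseteq\X$. I then need a lower bound on $\lambda\big(B(x,\epsilon/(2L))\cap B(\tilde x,r)\big)$ that is uniform over all such configurations. The key geometric observation is that the intersection of a ball of radius $\epsilon/(2L)$ centered at a point $x$ with a ball of radius $r$ whose boundary passes through $x$ has a Lebesgue measure that depends only on the two radii (by rigid motion invariance of $\lambda$), and is minimized exactly in the "tangent" configuration where $\|x-\tilde x\|=r$ — which is precisely the situation forced by $x\in\partial B(\tilde x,r)$. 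This minimal value is, by translating and rotating so that $x$ is at the origin and $\tilde x$ lies along the first coordinate axis at distance $r$, equal to $\lambda\big(B(0,\epsilon/(2L))\cap B(\vec r, r)\big) = \Lambda_\epsilon^{r,L}$, the quantity defined in the corollary. Since $B(x,\epsilon/(2L))\cap B(\tilde x, r)\subseteq\X$ (the first ball contains it trivially by definition, the second is contained in $\X$ by Assumption~\ref{assum:Theta:r_convex}), Assumption~\ref{assum:sampling_density:cor} gives
$$
\Prob_\X\!\left(B\!\left(x,\tfrac{\epsilon}{2L}\right)\right)
\geq
\Prob_\X\!\left(B\!\left(x,\tfrac{\epsilon}{2L}\right)\cap B(\tilde x, r)\right)
\geq
p_0\,\lambda\!\left(B\!\left(x,\tfrac{\epsilon}{2L}\right)\cap B(\tilde x, r)\right)
\geq
p_0\,\Lambda_\epsilon^{r,L}.
$$
As $x\in\partial\X$ was arbitrary, Assumption~\ref{assum:sampling_density} holds with $\Lambda_\epsilon^L = p_0\Lambda_\epsilon^{r,L}$.

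Having established this, I would invoke Theorem~\ref{thm:conservative_finite_sample}: its hypotheses (Assumptions~\ref{assum:f:lipschitz}, \ref{assum:sampling_density}, and $\partial\Y\subseteq f(\partial\X)$) are all now in force, so with probability at least $1-\delta_M$ where $\delta_M = D(\partial\X,\epsilon/(2L))(1-p_0\Lambda_\epsilon^{r,L})^M$, we get both $d_H(\hat\Y^M,\hull(\Y))\leq\epsilon$ and $\Y\subseteq\hat\Y_\epsilon^M$, which is exactly the claim.

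The main obstacle — and really the only non-bookkeeping step — is the uniform lower bound on the lens-shaped intersection volume, i.e., rigorously arguing that over all admissible pairs $(x,\tilde x)$ with $\|x-\tilde x\|\le r$, the volume $\lambda(B(x,\epsilon/(2L))\cap B(\tilde x,r))$ is minimized when $\|x-\tilde x\| = r$ and then identifying that minimum with $\Lambda_\epsilon^{r,L}$. Monotonicity of this volume in the center-distance is intuitively clear (pushing the large ball away can only shrink the overlap) and can be made precise by a translation/monotone-convergence argument, but it is the one place where care is needed; everything else is a direct substitution into the already-proven theorem. I would keep this argument short, noting that the minimizing configuration is the tangent case and invoking translation invariance of $\lambda$ to rewrite the minimum as $\lambda(B(0,\epsilon/(2L))\cap B(\vec r,r))$.
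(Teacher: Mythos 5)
Your proposal is correct and takes essentially the same approach as the paper: both reduce the corollary to Theorem~\ref{thm:conservative_finite_sample} by showing that Assumptions~\ref{assum:Theta:r_convex} and \ref{assum:sampling_density:cor} imply Assumption~\ref{assum:sampling_density} with $\Lambda_\epsilon^L = p_0\Lambda_\epsilon^{r,L}$, via the lens $B(x,\epsilon/(2L))\cap B(\tilde x,r)\subseteq\X$. The only remark worth making is that your self-identified ``main obstacle'' dissolves: since $x\in\partial\X$ and $B(\tilde x,r)\subseteq\X$, the center distance is exactly $\|x-\tilde x\|=r$ (if it were smaller, $x$ would lie in the interior of $\X$), so no monotonicity-in-distance argument is needed and rigid-motion invariance of $\lambda$ directly identifies $\lambda\big(B(x,\epsilon/(2L))\cap B(\tilde x,r)\big)$ with $\Lambda_\epsilon^{r,L}$, exactly as in the paper's Lemma~\ref{lem:X_vol_covrg_eps_ball:improved_caps}.
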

\begin{proof}
We refer to Appendix \ref{apdx:proof:cor:conservative_finite_sample}.
We first prove that Assumptions \ref{assum:Theta:r_convex} and \ref{assum:sampling_density:cor} imply that Assumption \ref{assum:sampling_density} holds with $\Lambda_{\epsilon}^{L}=p_0\Lambda_{\epsilon}^{r,L}$. 
The finite-sample bound then follows by applying Theorem \ref{thm:conservative_finite_sample}. 
\end{proof}

The constant $\Lambda_{\epsilon}^{r,L}$ corresponds to the $p$-dimensional Lebesgue volume of two hyperspherical caps and can be computed analytically, see \citep{Li2011,Petitjean2013} and Appendix \ref{appendix:spherical_caps}.

\subsection{Insights: the difficulty of reachability analysis and  algorithmic design}\label{sec:finite_sample:insights}
Theorem \ref{thm:conservative_finite_sample} reveals which characteristics of the problem make reachability analysis challenging: 
\begin{itemize}[leftmargin=4.5mm]
  \setlength\itemsep{0.0mm}%
    
    \item \textbf{Assuming the smoothness of $f$ is necessary:} %
    given an input set $\X$ and a sampling distribution $\Prob_{\X}$, %
    one can construct problems for which sampling-based reachability analysis algorithms require arbitrarily many samples to compute an $\epsilon$-accurate approximation of $\Y$, see Section \ref{sec:results:sensitivity}. To derive finite-sample rates, assuming that the reachability map $f$ is $L$-Lipschitz (Assumption \ref{assum:f:lipschitz}) is necessary if only  assumptions on input coverage density (Assumption \ref{assum:sampling_density}) are available.
   
    \item \textbf{The smoother the easier}: 
a smaller Lipschitz constant $L$ 
and a larger radius parameter $r$ 
induce 
tighter bounds in Theorem \ref{thm:conservative_finite_sample}, requiring a smaller number of samples $M$ to obtain a desired accuracy with high probability $1-\delta_M$. 
Indeed, such conditions guarantee a lower bound on the probability of sampling outputs $y_i=f(x_i)\in\Y$ that are close to the boundary $\partial\Y$, which is necessary to accurately reconstruct the true convex hull of the reachable set from samples.

    \item \textbf{Scalability}: by Theorem \ref{thm:conservative_finite_sample}, the number of required samples to reach a desired  $\epsilon$-accuracy with high probability depends on the covering number. This constant characterizes the size of the parameter space 
    in terms of dimensionality 
    (the number of different parameters) and volume (variations of each parameter). 
    Given any $\X\,{\in}\,\K$ and $d\,{=}\,\sup_{x\in\partial\X}\|x\|$, a simple and general bound for the covering number is  
    $
    D(\partial\X,\epsilon)
    \,{\leq}\,
    \left(
    2d\sqrt{n}/\epsilon
    \right)^n
    $ \citep{ShalevShwartz2009}.
    
\end{itemize}

\section{Results and applications}\label{sec:results}
We perform a sensitivity analysis in Section \ref{sec:results:sensitivity} to illustrate the insights from Theorem \ref{thm:conservative_finite_sample}. %
In Section \ref{sec:results:verif_closed_nn}, we compute the reachable sets of a dynamical system with a simple neural network policy and compare with prior work. 
Finally, in Section \ref{sec:results:mpc}, we embed  $\epsilon$-\randup in a model predictive control (MPC) framework to reliably control a robotic platform.  %
 Our code and hardware results are available at {\scriptsize \url{https://github.com/StanfordASL/RandUP}} and {\scriptsize \url{https://youtu.be/sDkblTwPuEg}}. 
All computation times are measured on a computer with a 3.70GHz Intel Core i7-8700K CPU.

\newpage
\phantom{asdf}

\vspace{-14mm}

\subsection{Sensitivity analysis}\label{sec:results:sensitivity}

	\begin{wrapfigure}{R}{0.40\linewidth}
	\begin{minipage}{0.95\linewidth}
	\vspace{-11mm}
    \centering
	\includegraphics[width=1\linewidth]{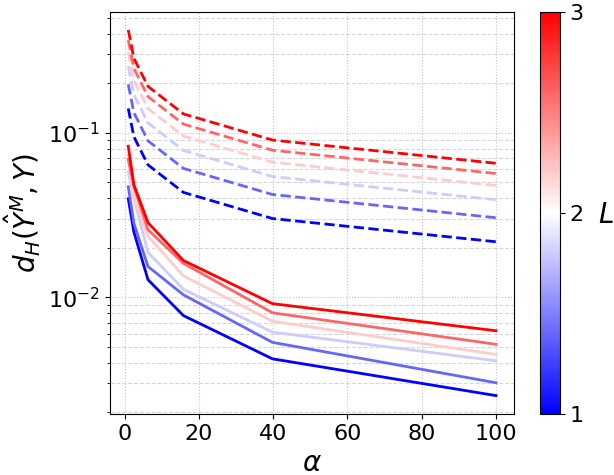}
	\caption{Results for the sensitivity analysis in Section \ref{sec:results:sensitivity}. Experimental results are shown with continuous lines, theorical upper bounds with dashed lines.}\label{fig:sensitivity}
	\vspace{-3mm}
	\end{minipage}%
	\end{wrapfigure}
	
	We analyze the sensitivity of $\epsilon$-\randup  to the sampling distribution and the smoothness of the reachability map. 
	We consider a $2$-dimensional input ball $\X=B(0,1)$ and the map $f(x)=(Lx_1,x_2)$ with $L\geq 1$. Clearly, $\X^\comp$ is $1$-convex and $f$ is $L$-Lipschitz continuous, so %
	Corollary \ref{cor:conservative_finite_sample} applies for any sampling distribution satisfying Assumption \ref{assum:sampling_density:cor}. 
We consider a distribution $\Prob_\X^\alpha$ that depends on a parameter $\alpha\geq 1$, such that $\Prob_\X^\alpha$ varies from a uniform distribution over $\X$ for $\alpha=1$ to a uniform distribution over the boundary $\partial\X$ as $\alpha\rightarrow\infty$. Given $\delta_M=10^{-3}$, we determine the minimum padding $\epsilon$ guaranteeing $\Prob(
d_H(
\hat\Y^M, 
\Y
)\leq \epsilon
)\geq 1-\delta_M$ using Corollary \ref{cor:conservative_finite_sample}, see Appendix \ref{apdx:sensitivity}. 
We take $M=1000$ samples and present results in Figure \ref{fig:sensitivity}. 
We observe better performance than the predicted finite-sample bounds 
and that distributions with a higher probability of sampling close to the boundary %
(i.e., larger values of $\alpha$) perform better, corresponding to lower Hausdorff distance errors. 
Also, $\epsilon$-\randup performs better on problems with smoother reachability maps, %
as is visible from our empirical evaluation and theoretical bounds on the Hausdorff distance.  This validates the discussion in Section \ref{sec:finite_sample:insights}.

\vspace{-2mm}
    
\subsection{Verification of neural network controllers}\label{sec:results:verif_closed_nn}

 \begin{figure}[htb!]
 \vspace{-3mm}
 \centering
  \includegraphics[width=0.95\linewidth]{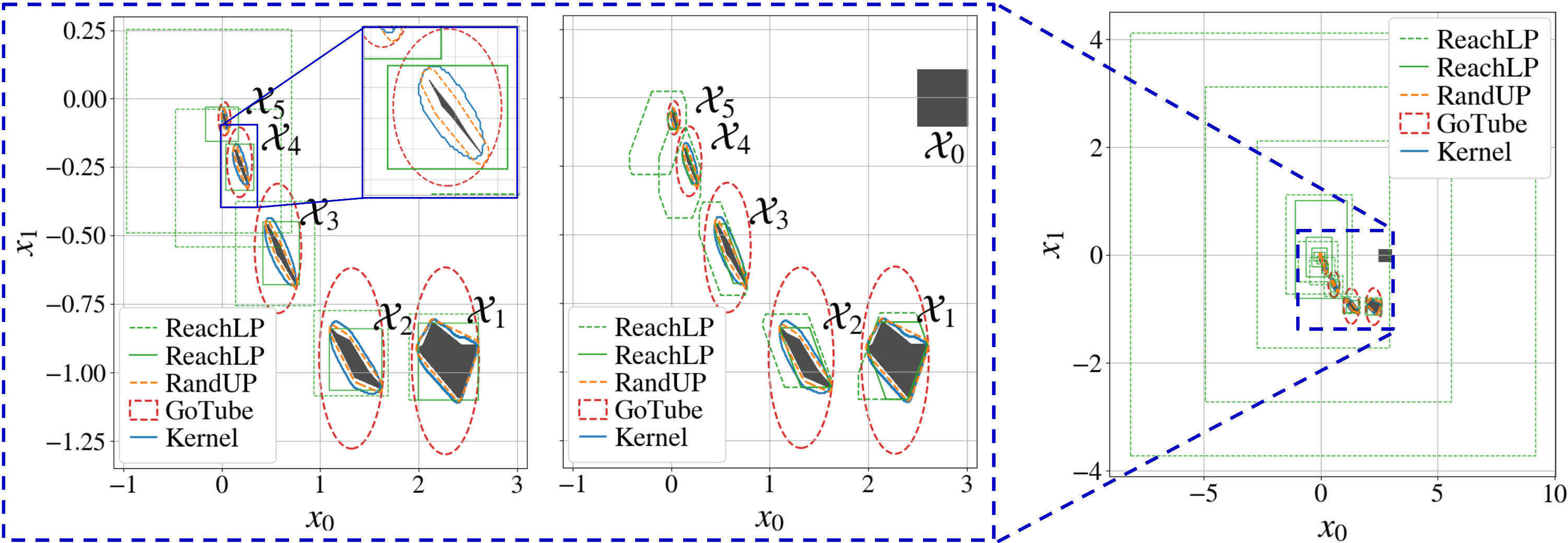}
 \vspace{-3mm}
  \caption{Reachable sets computed in  Section \ref{sec:results:verif_closed_nn} for a total prediction horizon $N=9$. Sets from the formal method \reachlp are shown in green, dashed sets correspond to no input splitting, straight-lines correspond to splitting $\X_0$ into $16$ components.  
  We use $M=10^3$ samples for all sampling-based methods and $\epsilon=0.02$. 
}
 \label{fig:nn_controller:all}
 \vspace{-3mm}
 \end{figure}
Next, we consider the verification of a neural network controller $u_t=\pi_{\textrm{nn}}(x_t)$ for a known linear dynamical system $x_{t+1}=Ax_t+Bu_t$, where $t\in\mathbb{N}$ denotes a time index, and $x_t\in\R^2$ and $u_t\in\R$ denote the state and control input. %
Given a rectangular set of initial states $\X_0\subset\R^2$, 
the problem consists of estimating the reachable set at time $t\in\mathbb{N}$ defined as
$
\X_t=\{(A(\cdot)+B\pinn(\cdot))\circ\dots\circ 
(Ax_0+B\pi_{\textrm{nn}}(x_0)): x_0\in\X_0\}$. 
Defining $(\X,\Y)=(\X_0,\X_t)$ and $f(x)=(A(\cdot)+B\pinn(\cdot))\circ\dots\circ 
(Ax+B\pi_{\textrm{nn}}(x))$, we see that this problem fits the mathematical form described in Section \ref{sec:intro}. 
We use a ReLU network $\pinn$ from \citep{Everett21_journal} with two layers of $5$ neurons each. 
	\begin{wrapfigure}{R}{0.40\linewidth}
	\begin{minipage}{0.95\linewidth}
	\vspace{-2.5mm}
    \centering	\includegraphics[width=1\linewidth,trim=0 30 0 0, clip]{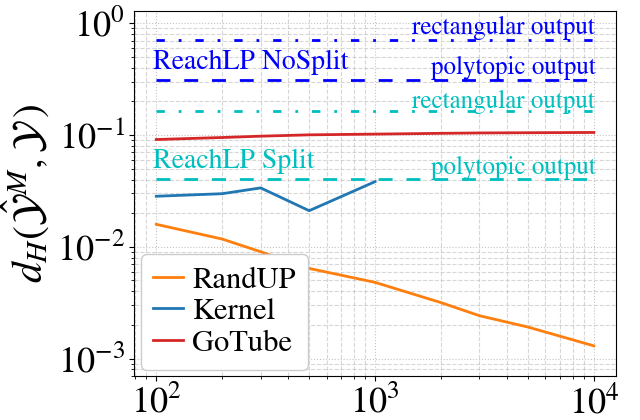}	
\includegraphics[width=1\linewidth]{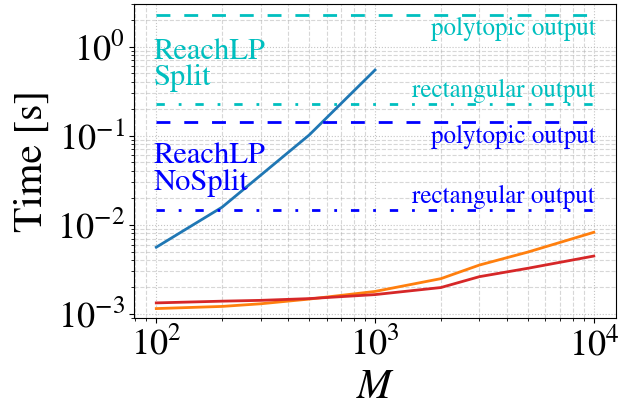}
\vspace{-8mm}
	\caption{Neural network verification analysis in Section \ref{sec:results:verif_closed_nn}: we report the computation time of each algorithm and their averaged Hausdorff distance error (with $\epsilon{=}0$ for $\epsilon$-\randup and \gotube) over $100$ tries when estimating $\Y=\X_4$.}\label{fig:nn_controller:M_vs_dH_time}
	\vspace{-4mm}
	\end{minipage}%
	\end{wrapfigure}
We compare $\epsilon$-\randup with the formal method \reachlp \citep{Everett21_journal}\footnote{Comparisons with \reachsdp \citep{Hu2020}, which is more conservative than \reachlp, show a similar trend.} %
and with two recently-derived sampling-based approaches: the kernel method proposed in \citep{ThorpeL4DC2021} and \gotube \citep{Gruenbacher2021}. We implement \gotube using the $\epsilon$-\randup algorithm where we replace the last convex hull bounding step with an outer-bounding ball. 
As ground-truth, we use the reachable sets from $\epsilon$-\randup with $\epsilon\,{=}\,0$ and $M\,{=}\,10^6$, which is motivated by the asymptotic results from Theorem \ref{thm:asymptotic_convergence} and was previously done in \citep{Everett21_journal}. 
 We refer to Appendix \ref{apdx:exps:nn} for details and present results in Figures \ref{fig:nn_controller:all} and \ref{fig:nn_controller:M_vs_dH_time}.

\textbf{Formal methods} that explicitly bound the output of each layer of the neural network can guarantee that %
their reachable set approximations are always conservative. %
However, obtaining tight approximations with \reachlp requires splitting the input set: a computationally expensive procedure (Fig.\,\ref{fig:nn_controller:M_vs_dH_time}, bottom). 
Figures \ref{fig:nn_controller:all}  and \ref{fig:nn_controller:M_vs_dH_time} show that \reachlp is more conservative than $\epsilon$-\randup even when considering polytopic outputs with eight facets. %
As shown in Figure \ref{fig:nn_controller:all} (right),  %
the conservatism of these methods increases over time. This shows that even when considering small neural networks, verifying safety specifications over long horizons remains an open challenge.

\textbf{Sampling-based} approaches %
do not suffer from the long-horizon conservatism of formal methods. 
This comes at the expense of probabilistic guarantees (that rely on knowledge of the Lipschitz constant of the model), as opposed to deterministic conservatism guarantees. %
$\epsilon$-\randup and \gotube have comparable computation time\footnote{%
Plotting the kernel-based level set estimator in \citep{ThorpeL4DC2021} from $M$ samples requires classifying a dense grid of points. To evaluate the computation time of this method, we only account for the time to classify $M$ new samples.}  and are significantly faster than other approaches. %
$\epsilon$-\randup is significantly more accurate than prior work, especially for larger values of $M$. 
Also, the results from Theorem \ref{thm:conservative_finite_sample} allow for principled hyperparameter selection for $\epsilon$-\randup: given $\epsilon=0.02$, 
sampling $1400$ 
uniformly-distributed inputs on $\partial\X$ is sufficient for the output sets to be conservative with probability at least $1-10^{-4}$ (for $L=1$, see Section \ref{apdx:exps:nn}). 

These experiments show that for short-horizon problems ($5$ steps) with relatively simple network architectures, both \reachlp and $\epsilon$-\randup return accurate reachable set approximations. 
For longer-horizon problems ($9$ steps) with networks of moderate dimensions (which allows using existing methods to pre-compute a Lipschitz constant, see \citep{Fazlyab2019} and Section \ref{appendix:lipschitz_relu}), $\epsilon$-\randup is guaranteed to  efficiently return non-overly-conservative reachable set approximations with high probability. %
Finally, though we do not present such results here, 
the generality of $\epsilon$-\randup allows it to tackle complex model architectures (see \citep{LewEtAl2022} for experiments with longer horizons and more complex networks with uncertain weights) for which no alternative methods exist, albeit without finite-sample accuracy guarantees.

\vspace{-1mm}

\subsection{Application to robust model predictive control}\label{sec:results:mpc}

Finally, we show that $\epsilon$-\randup can be embedded in a robust MPC formulation to reliably control a planar spacecraft system actuated by cold-gas thrusters. Its state at time $t\geq 0$ is denoted as 
$x_t\in\R^6$ and its control inputs are given as $u_t\in\R^3$. 
We use an auxiliary linear feedback controller \citep{LewEtAl2022} and an uncertain linear model $x_{t+1}=f(x_t,u_t,m,F)$ that depends on an uncertain mass $m\in[10,18] \,\textrm{kg}$ (depending on the payload transported by the robot and the current weight of the gas tanks) and an unknown force $F=(F_x,F_y)\in[-0.015,0.015]^2 \,\textrm{N}$ that accounts for the tilt of the table. 
To control the system from an initial state $x_0\in\R^n$ to a goal region $\Xgoal\subset\R^n$ while 
minimizing fuel consumption and 
remaining in a feasible set $\Xsafe$ (i.e., avoiding obstacles and respecting velocity bounds), we consider the following MPC formulation:
\begin{subequations}
\label{eq:full_problem}
\begin{align}
\mathop{\text{min}}_{(\mu,\nu)}
\quad &
\scalebox{0.95}{$\sum_{t=1}^{N}$} (\mu_t-x_{\textrm{goal}})^\top Q(\mu_t-x_{\textrm{goal}})
+
\scalebox{0.95}{$\sum_{t=1}^{N}$}
\nu_t^\top R \nu_t,
\quad 
\textrm{s.t.}
\quad\,
\mu_0=x_0, 
\label{eq:cost:measure}
\\[-1mm]
\textrm{ }
\quad
& 
\mu_{t+1} = f(\mu_t,\nu_t, \bar{m}, \bar{F}),
\ \ 
\nu_t\in\U, 
\ \  
\X_t(\nu) \subset \Xsafe,
\ \ 
\X_N(\nu) \subset \Xgoal,
\ \  
\ {\tiny t\,{=}\,0, \mydots,N\,{-}\,1}
.
\label{eq:robust_constraints_orig}
\end{align}
\end{subequations} 
where $\mu=(\mu_0,\dots,\mu_N)$ %
and $\nu=(\nu_0,\dots,\nu_{N-1})$ %
are optimization variables representing the nominal state and control trajectories, $(\bar{m},\bar{F}_x, \bar{F}_y)=(14,0,0)$ are nominal parameter values, $x_{\textrm{goal}}\in\Xgoal$ is the center of the goal set, and the reachable sets $\X_t(\nu)\subset\R^n$ are defined as
$\X_t(\nu) = %
\{
x_t=f(\cdot,\nu_{t-1},m,F)
\circ\dots\circ
f(x_0,\nu_0,m,F):
\
(m,F)\in[10,18]\times[-0.015,0.015]
\}$.
The numerical implementation %
is described in \citep{LewPavone2020}. 
With a Python implementation, $\epsilon=0.03$, and $M=10^3$, our MPC controller runs at $10$\textrm{Hz} %
which is sufficient for this platform and could be improved, e.g., by parallelizing computations on a GPU. 
We compare with a MPC baseline that does not consider uncertainty over the parameters (i.e., assumes $(m,F)\in\{14\}{\times}\{(0,0)\}$).  As shown in Figure \ref{fig:results:hardware} and in the attached video, this baseline is unsafe and collides with an obstacle. 
In contrast, our reachability-aware controller is recursively feasible, satisfies all constraints, and allows safely reaching the goal. 
These experiments motivate the development of efficient reachability algorithms that can be embedded in generic control frameworks to account for uncertain parameters. %

 \begin{figure}[t]
 \begin{minipage}{.32\linewidth}
 \centering
 \vspace{-7mm}
 \includegraphics[width=1\linewidth]{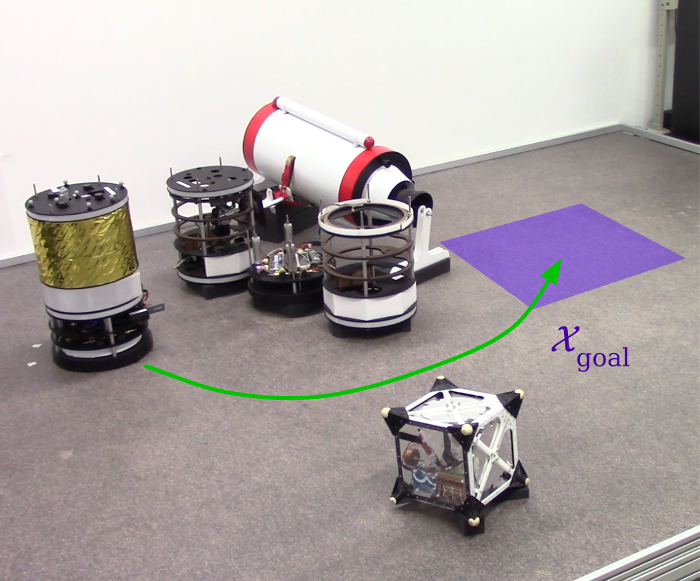}
 \end{minipage}
 \hspace{1mm}
 \begin{minipage}{.26\linewidth}
 \centering
 \vspace{-5mm}
 \includegraphics[width=1\linewidth]{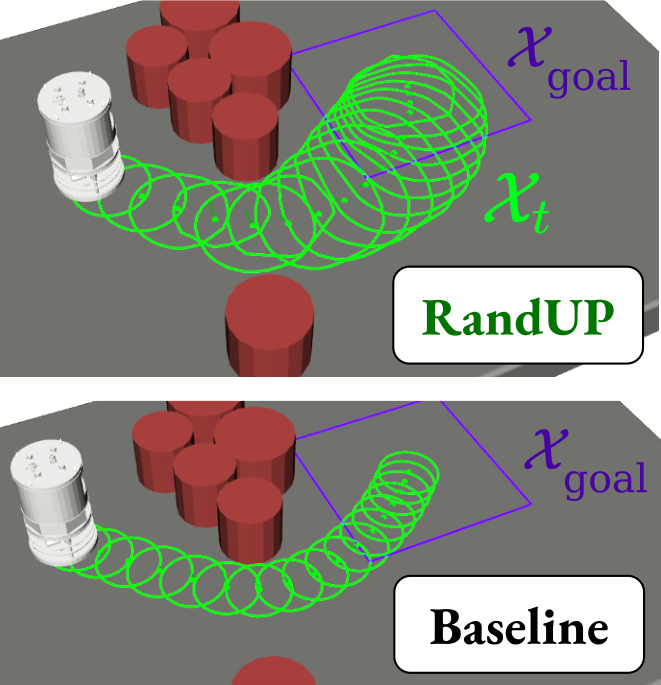}
 \end{minipage}
 \begin{minipage}{.4\linewidth}
 \centering
 \vspace{-5mm}
 \includegraphics[width=0.95\linewidth]{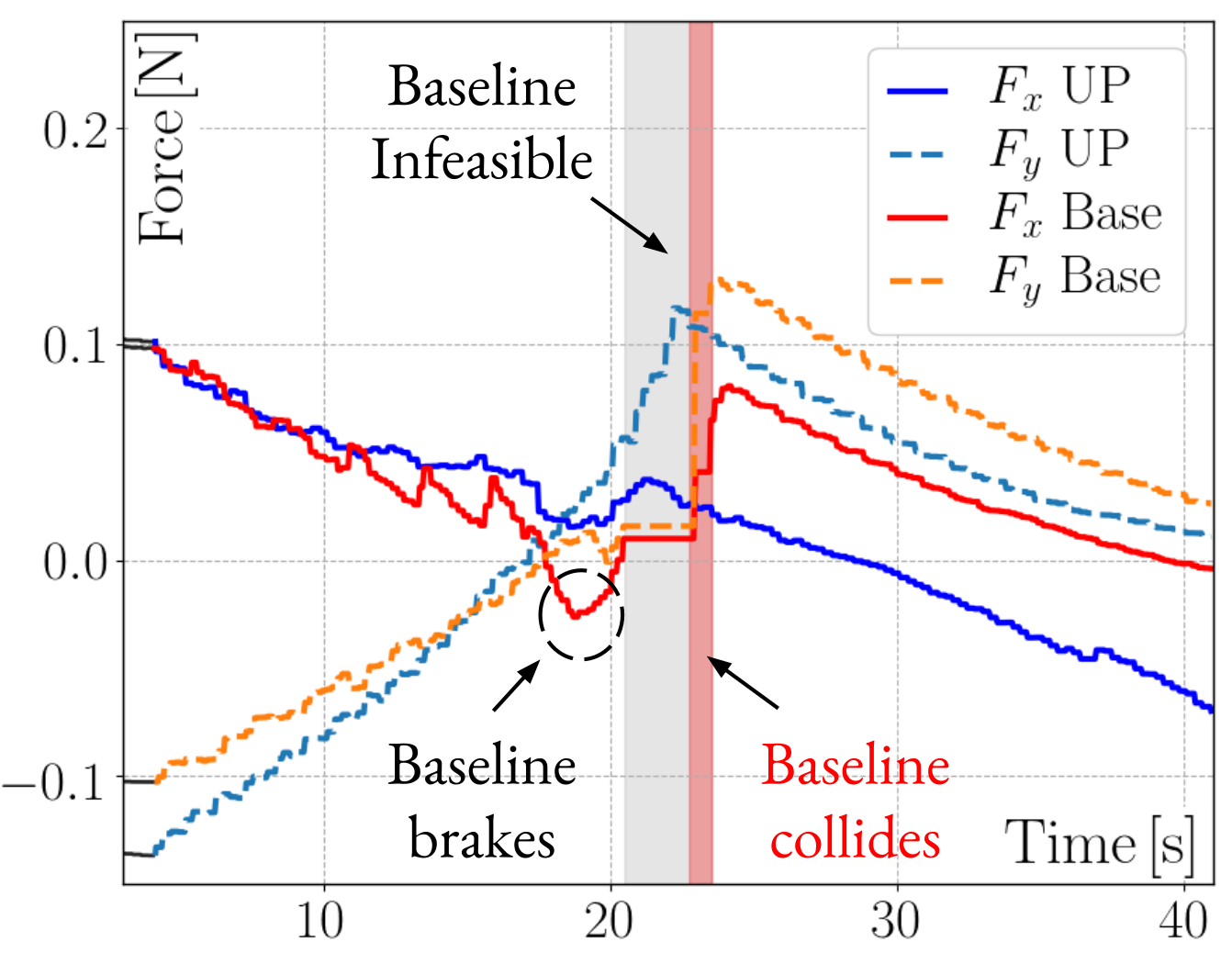}
 \end{minipage}
 \vspace{-3mm}
  \caption{Application of $\epsilon$-\randup to safely control a free-flyer robot in a cluttered environment (left). Using a model predictive controller that does not account for the uncertain dynamics (middle) leads to unsafe behavior, colliding with an obstacle and causing the optimization problem to be infeasible at run-time (right).
}
 \label{fig:results:hardware}
 \vspace{-6mm}
 \end{figure}

\section{Conclusion}
We derived new asymptotic and finite-sample statistical guarantees for $\epsilon$-\randup, a simple yet efficient algorithm for reachability analysis of general systems. We demonstrated its efficacy for a neural network verification task and its applicability to robust model predictive control. 
In future work, we will investigate tighter finite-sample bounds by leveraging further information about the smoothness of the input set boundary $\partial\X$. 
Of practical interest is investigating which sampling distributions enable better sample efficiency, interfacing $\epsilon$-\randup with Lipschitz constant computation methods (e.g., \citep{Fazlyab2019} for neural networks), 
exploring  methods to scale to high-dimensional input spaces, and applying the technique to safety-aware reinforcement learning.
\acks{The authors thank Robin Brown for her helpful feedback and insightful discussions about neural network verification, 
Edward Schmerling for his helpful comments and suggestions, and Adam Thorpe for helpful discussions about kernel methods.  
The NASA University Leadership Initiative (grant \#80NSSC20M0163) provided funds to assist the authors with their research, but this article solely reflects the opinions and conclusions of its authors and not any NASA entity. NVIDIA provided funds to assist the authors with their research. L.J. was supported by the National Science Foundation via grant CBET-2112085.}

\bibliography{ASL_papers,main}

\newpage
\appendix
\section{Formal definitions and random set theory}\label{apdx:random_set_theory}

As a complement to Section \ref{sec:formal_setting}, this section provides a formal description of $\epsilon$-\randup using random set theory. 
Since the set estimator $\hat\Y^M_{\epsilon}$ in \eqref{eq:estimator_eps} is a random variable, describing its measurability properties is important to formally analyze its convergence properties (in an appropriate topology, which we define using the Hausdorff distance). 
In particular, random set theory provides a rigorous framework to characterize the probability distribution of $\hat\Y^M_{\epsilon}$ in Theorems \ref{thm:asymptotic_convergence} and \ref{thm:conservative_finite_sample}.

We denote 
 $\K$ for the family of nonempty compact subsets of $\R^n$, 
$\B(\R^n)$ for the Borel $\sigma$-algebra for the Euclidean topology on $\R^n$ associated to the usual Euclidean norm $\|\cdot\|$, 
$\lambda(\cdot)$ for the Lebesgue measure over $\R^p$, 
$\hull(A)$ for the convex hull of a subset $A\subset\R^n$, 
$\oplus$ for the Minkowski sum, 
$B(x,r)=\{y\in\R^n: \|y-x\|\leq r\}$ for the closed ball of center $x\in\R^n$ and radius $r\geq 0$, 
$\mathring{B}(x,r)$ for the open ball,  
and 
$\partial A$ for the boundary of any $A\subset\R^n$. 

\subsection{Random set theory}

Our analysis hinges on the observation that the set estimator $\hat\Y_\epsilon^M$ is a random compact set, i.e., $\hat\Y_\epsilon^M$ is a random variable taking values in the family of nonempty compact sets $\K$.  
To characterize the accuracy of
our estimator,
we use the \textit{Hausdorff metric}, which is defined for any $A,B\in\K$  in \eqref{eq:metric:Hausdorf} as 
\begin{equation}%
d_\textrm{H}(A,B)
=
\max\big(
\sup_{x\in B}
\Inf_{y\in A}
\|x-y\|, \
\sup_{x\in A}
\Inf_{y\in B}
\|x-y\|
\big).
\end{equation}
This metric induces the \textit{myopic topology} on $\K$ \citep{Molchanov_BookTheoryOfRandomSets2017} with its associated generated Borel $\sigma$-algebra $\B(\K)$. 
$(\K,\B(\K))$ is a measurable space, which motivates the following definition:

\begin{definition}[Random compact set]\label{def:random_compact_set}
Let $(\Omega,\G,\Prob)$ be a probability space. A map $\hat\Y:\Omega\rightarrow\K$ is a random compact set if $\{\omega\in\Omega : \hat\Y(\omega)\in\mathscr{Y}\}\in\G$ for any $\mathscr{Y}\in\B(\K)$.
\end{definition}
Since a random compact set $\hat\Y$ is a random variable with values in $\K$, 
its distribution is characterized by the probability $\Prob(\hat\Y\in\mathscr{Y})$ that it takes values in a measurable subset of compact sets  $\mathscr{Y}\in\B(\K)$. %
Equivalently \citep{Molchanov_BookTheoryOfRandomSets2017}, %
the law of $\hat\Y$ is %
characterized by the capacity functional $T_{\hat\Y}:\K\rightarrow[0,1]$, defined for any $K\in\K$ as 
$
T_{\hat\Y}(K)=
\Prob(\hat\Y \cap K\neq \emptyset)
$. 
This %
functional describes the probability that $\hat\Y$ intersects any compact set $K$. We will analyze this functional to prove the convergence of our set estimator in Theorems \ref{thm:asymptotic_convergence} and \ref{thm:conservative_finite_sample}. 

Our asymptotic convergence result in Theorem \ref{thm:asymptotic_convergence} relies on \citep[Proposition 1.7.23]{Molchanov_BookTheoryOfRandomSets2017}, which provides sufficient conditions for the convergence of random closed sets. 
We restate this result in the particular case of a sequence of random compact sets.

 \begin{thm}[Convergence of Random Sets to a Deterministic Limit \citep{Molchanov_BookTheoryOfRandomSets2017}]\label{thm:conv_randSets_detLim}
 \\
Let $\Y\in\K$ and let $\{\hat\Y^M\}_{M=1}^\infty$ be a sequence of random compact sets. Assume that 
\begin{itemize}[leftmargin=5.5mm]
\item 
For any $K\in\mathcal{K}$ such that $\Y\cap K=\emptyset$,
\begin{equation}\tag{{C1}}\label{eq:C1}
\hspace{-2mm}
\Prob(\hat\Y^M\cap K \neq \emptyset \ \textrm{infinitely often}) 
= 
\Prob\left(\bigcap_{N=1}^\infty\bigcup_{M=N}^\infty
\{\hat\Y^M\cap K \neq \emptyset\}\right) 
=
0.
\end{equation}
\item
For any open subset $G\subset\R^n$ such that $\Y\cap G \neq \emptyset$,
\begin{equation}\tag{{C2}}\label{eq:C2}
\hspace{-2mm}
\Prob(\hat\Y^M\cap G = \emptyset \ \textrm{infinitely often}) 
= 
\Prob\left(\bigcap_{N=1}^\infty\bigcup_{M=N}^\infty
\{\hat\Y^M\cap G = \emptyset\}\right) 
= 
0.
\end{equation}
\end{itemize}
Then, the sequence of random compact sets $\{\Y^M\}_{M=1}^\infty$ 
almost surely converges to $\Y$ (in the myopic topology), i.e., almost surely,
$
d_H(\Y^M,\Y)\rightarrow 0
$ 
 as $M\rightarrow\infty$.
\end{thm}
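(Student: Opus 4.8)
The plan is to reproduce the argument of \citep[Proposition 1.7.23]{Molchanov_BookTheoryOfRandomSets2017}, of which this is a restatement specialized to sequences of random \emph{compact} sets. The starting point is the hit-and-miss characterization of myopic convergence: for closed sets $F_M$ and a nonempty compact $\Y$, one has $F_M\to\Y$ in the myopic topology if and only if (i) for every open $G\subset\R^n$ with $\Y\cap G\neq\emptyset$ one has $F_M\cap G\neq\emptyset$ for all large $M$, and (ii) for every compact $K\subset\R^n$ with $\Y\cap K=\emptyset$ one has $F_M\cap K=\emptyset$ for all large $M$. Since $\R^n$ is second countable and locally compact, it suffices to verify (i) over the countable family $\mathcal{G}_0$ of open balls with rational centers and radii that intersect $\Y$, and (ii) over the countable family $\mathcal{K}_0$ of closed balls with rational centers and radii that are disjoint from $\Y$: any open $G$ hitting $\Y$ contains some member of $\mathcal{G}_0$ that still hits $\Y$, and any compact $K$ disjoint from $\Y$ is at positive distance from $\Y$ and hence covered by finitely many members of $\mathcal{K}_0$.

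Next I would feed these countable test families into the Borel--Cantelli conclusions that are already built into the hypotheses. For each $G\in\mathcal{G}_0$, condition \eqref{eq:C2} gives that, almost surely, $\hat\Y^M\cap G\neq\emptyset$ for all large $M$; denote this probability-one event by $A_G$. For each $K\in\mathcal{K}_0$, condition \eqref{eq:C1} gives that, almost surely, $\hat\Y^M\cap K=\emptyset$ for all large $M$; denote this probability-one event by $B_K$. Putting $\Omega_0=\bigcap_{G\in\mathcal{G}_0}A_G\cap\bigcap_{K\in\mathcal{K}_0}B_K$, a countable intersection of probability-one events, gives $\Prob(\Omega_0)=1$. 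On any fixed $\omega\in\Omega_0$ I would then check (i) and (ii) for the deterministic sequence $\hat\Y^M(\omega)$: given an open $G$ hitting $\Y$, pick an inscribed ball from $\mathcal{G}_0$ that also hits $\Y$ and use $A_G$; given a compact $K$ disjoint from $\Y$, pick a finite subcover of $K$ from $\mathcal{K}_0$ and use the finitely many events $B_K$, taking the maximum of the corresponding thresholds. This gives $\hat\Y^M(\omega)\to\Y$ in the myopic topology for every $\omega\in\Omega_0$, i.e.\ almost surely.

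The final step is to note that myopic convergence to a compact set coincides with Hausdorff convergence: with $\Y$ compact, (i)--(ii) are precisely Painlev\'e--Kuratowski convergence $\limsup_M\hat\Y^M\subseteq\Y\subseteq\liminf_M\hat\Y^M$, and once the sequence is eventually bounded this is exactly $d_H(\hat\Y^M,\Y)\to0$; boundedness is automatic in the settings of interest, where $\hat\Y^M=\hull(\{y_i\}_{i=1}^M)\subseteq\hull(\Y)$, and holds in general by applying \eqref{eq:C1} to compact shells around $\Y$. The hard parts will be the reduction from the uncountable collections of test sets to the countable families $\mathcal{G}_0,\mathcal{K}_0$ (needed so that the almost-sure statements can be intersected over all test sets simultaneously) and this myopic-to-Hausdorff upgrade, which leans on compactness of $\Y$ and a tightness argument; the Borel--Cantelli bookkeeping itself is routine given \eqref{eq:C1}--\eqref{eq:C2}, and one could alternatively just cite \citep[Proposition 1.7.23]{Molchanov_BookTheoryOfRandomSets2017} directly.
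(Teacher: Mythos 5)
The paper does not prove this statement at all: it is presented as a restatement, for compact sets, of \citep[Proposition 1.7.23]{Molchanov_BookTheoryOfRandomSets2017}, and the citation is the entirety of the paper's justification. Your sketch is therefore not "a different route from the paper's proof" so much as a reconstruction of the textbook argument the paper imports, and as such it is essentially correct: the hit-and-miss characterization of set convergence, the reduction to countable families of rational balls (open balls hitting $\Y$ for \eqref{eq:C2}, closed balls disjoint from $\Y$ obtained via a finite subcover for \eqref{eq:C1}), and the intersection of countably many probability-one events are exactly the standard steps. The one point where you are right to be cautious is the final upgrade from hit-and-miss convergence to $d_H(\hat\Y^M,\Y)\to 0$: with misses indexed by \emph{compact} sets, as in \eqref{eq:C1}, the conditions only give Painlev\'e--Kuratowski (Fell) convergence, and a sequence like $\{0\}\cup\{M\}$ shows this does not imply Hausdorff convergence without eventual uniform boundedness. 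Your "compact shell" fix works only if the $\hat\Y^M$ are connected (otherwise a far-away component can simply skip over the shell), so strictly speaking the theorem as stated needs either a boundedness hypothesis or the miss condition taken over closed rather than compact sets; in the paper's application this is immaterial, since $\hat\Y^M_{\epsilon_M}\subseteq\hull(\Y)\oplus B(0,\sup_M\epsilon_M)$ is convex and uniformly bounded. Given that the paper itself simply cites Molchanov, citing the proposition directly, as you note at the end, would also have been a complete answer.
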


\subsection{Sampling-based reachability analysis}

As discussed in Section  \ref{sec:formal_setting}, 
$\epsilon$-\randup relies on the choice of three parameters: 
\begin{itemize}\setlength\itemsep{0.5mm}
\item a number of samples $M\in\mathbb{N}$, 
\item a padding constant $\epsilon>0$, 
\item a probability measure $\Prob_\X$ on $(\R^{p},\B(\R^{p}))$ such that $\Prob_\X(\X)=1$. 
\end{itemize}
This algorithm consists of sampling $M$ independent identically-distributed inputs $x_i$ according to $\Prob_\X$, 
of evaluating each output sample $y_i=f(x_i)$, %
and 
of computing the reachable set estimator $\hat\Y_\epsilon^M=\hull\left(\{y_i\}_{i{=}1}^M\right)\oplus B(0,\epsilon)$ as in \eqref{eq:estimator_eps}. 

Formally, let $(\Omega,\G,\Prob)$ be a probability space such that the $x_i$'s are $\G$-measurable independent random variables which laws $\Prob_\X$ satisfy  $\Prob_\X(A)=\Prob(x_i\in A)$ for any $A\in\B(\R^p)$\footnote{For a canonical construction, 
let $
\Omega=\R^p\times\mydots\times\R^p$ ($M$ times), 
$\G=\B(\R^p)\otimes\mydots\otimes\B(\R^p)$,  
$\Prob=\Prob_\X\otimes\mydots\otimes\Prob_\X$ the product measure, and %
$x=(x_1,\mydots,x_M): \Omega\rightarrow\Omega:  \omega\mapsto\omega$. Then, the $x_i$ are independent and have the law $\Prob_\X$.}. %
Then, the $y_i$'s are independent random variables which laws $\Prob_{\Y}$ satisfy %
$\Prob_{\Y}(B)=\Prob(y_i\in B)=\Prob_\X(f^{-1}(B))$ for any $B\in\B(\R^n)$. 
It follows that  
$\hat\Y_\epsilon^M:\Omega\rightarrow\K$ is a random compact set satisfying Definition \ref{def:random_compact_set}\footnote{Compactness of $\X$ and continuity of $f$ guarantee that both $\Y$ and $\hat\Y_\epsilon^M(\omega)$ are compact for any $\omega\in\Omega$. 
We refer to \citep{Molchanov_BookTheoryOfRandomSets2017} and \citep{LewPavone2020} for a proof of measurability of $\hat\Y_\epsilon^M$.}. 
Intuitively, different input samples $x_i(\omega)$  induce different output samples $y_i(\omega)$, resulting in  different approximated compact reachable sets $\hat\Y_\epsilon^M(\omega)\in\K$, where  $\omega\in\Omega$.

\section{Proofs}\label{appendix:proofs}
\subsection{Proof of Theorem \ref{thm:asymptotic_convergence}}\label{apdx:proof:thm:asymptotic_convergence}

We first restate Assumption \ref{assum:XBoundary:posMeasure} and Theorem \ref{thm:asymptotic_convergence} from Section \ref{sec:asymptotic}. 
\\[2mm]
\textbf{Assumption \ref{assum:XBoundary:posMeasure}} $\Prob_\X(\{x\in\X: f(x)\in\mathring{B}(y,r)\})>0$  
for all $y\in\partial\Y$ and all $r>0$.
\\[2mm]
\textbf{Theorem \ref{thm:asymptotic_convergence}}
Let $\bar{\epsilon}\geq 0$ and 
$(\epsilon_M)_{M\in\mathbb{N}}$ be a sequence of padding radii such that $\epsilon_M\geq 0$ for all $M\in\mathbb{N}$ and $\epsilon_M\rightarrow \bar{\epsilon}$ as $M\rightarrow\infty$. 
For any $\epsilon\geq 0$, define the estimator $\hat\Y^M_{\epsilon}=\hull\left(\{y_i\}_{i{=}1}^M\right)\oplus B(0,\epsilon)$.  
Then, under Assumption \ref{assum:XBoundary:posMeasure}, $\Prob$-almost surely, 
as $M\rightarrow\infty$, 
$$
d_H(
\hat\Y_{\epsilon_M}^M, 
\hull(\Y)\oplus B(0,\bar{\epsilon})
)
\mathop{\longrightarrow} 0.
$$
\begin{proof} 
Denote $\hat\Y^M=\hull(\{y_i\}_{i=1}^M)$, so that  $\hat\Y^M_{\epsilon_M}=\hull(\{y_i\}_{i=1}^M)\oplus B(0,\epsilon_M)$. 

To prove that almost surely, the sequence of random compact sets $\{\hat\Y^M_{\epsilon_M}(\omega), M\geq 1\}$ converges to $\hull(\Y)\oplus B(0,\bar{\epsilon})$
 as $M\rightarrow\infty$, we verify the conditions \eqref{eq:C1} and \eqref{eq:C2} of Theorem \ref{thm:conv_randSets_detLim}.

\vspace{2mm}

\textbf{\eqref{eq:C1}}: Let $K\in \mathcal{K}$ satisfy $(\hull(\Y)\oplus B(0,\bar{\epsilon}))\cap K = \emptyset$. 
Then, %
since $K$ and $\hull(\Y)\oplus B(0,\bar{\epsilon})$ are both closed, there exists some $\epsilon>0$ such that $(\hull(\Y)\oplus B(0,\bar{\epsilon}+\epsilon))\cap K = \emptyset$\footnote{More generally, given $A,B\in\K$, then  
$A\cap B=\emptyset$ implies that $A\cap(B\oplus B(0,\epsilon))=\emptyset \text{ for some $\epsilon>0$.}$}. 
 
 Next, since $\epsilon_M\rightarrow\bar{\epsilon}$ as $M\rightarrow\infty$, we have that $\bar{\epsilon}-\epsilon<\epsilon_M<\bar{\epsilon}+\epsilon$ for all $M\geq M_{\epsilon}$. 
 
Since $y_i(\omega)\in\Y$ almost surely for all $i$, $\hat\Y^M\subseteq\hull(\Y)$ almost surely for all $M\geq M_{\epsilon}$. 
Thus, for any $M\geq M_{\epsilon}$, 
$(\hat\Y^M\oplus B(0,\epsilon_M))
\subseteq
(\hull(\Y)\oplus B(0,\epsilon_M))
\subset
(\hull(\Y)\oplus B(0,\bar{\epsilon}+\epsilon))$. 

Combined with $(\hull(\Y)\oplus B(0,\bar{\epsilon}+\epsilon))\cap K = \emptyset$, this implies that $(\hat\Y^M\oplus B(0,\epsilon_M))\cap K=\emptyset$ almost surely for all $M\geq M_{\epsilon}$. 
	
	Thus, $\Prob(\hat\Y^M\oplus B(0,\epsilon_M)\cap K\neq\emptyset)=0$ for all $M\geq M_{\epsilon}$. 
	
	Therefore, $\sum_{M=1}^\infty \Prob(\hat\Y^M\oplus B(0,\epsilon_M)\cap K\neq\emptyset)<\infty$. By the first Borel-Cantelli lemma, we obtain that $\Prob(\hat\Y^M\oplus B(0,\epsilon_M)\cap K \neq \emptyset \ \ i.o.) = 0$. This concludes \eqref{eq:C1}. 
	
\vspace{2mm}

\textbf{\eqref{eq:C2}}: Let $G\subset\R^n$ be an open set satisfying $(\hull(\Y)\oplus B(0,\bar{\epsilon}))\cap G \neq \emptyset$. Equivalently, let $G\subset\R^n$ satisfy $\hull(\Y)\cap (G \oplus B(0,\bar{\epsilon}))\neq \emptyset$. We wish to prove that $\Prob(\hat\Y^M_{\epsilon_M}\cap G  = \emptyset \ i.o.) = 0$.

Let $G_\partial^1,G_\partial^2\subset\R^n$ be two arbitrary boundary-intersecting open sets such that $\partial\Y\cap (G_\partial^1\oplus B(0,\bar{\epsilon})) \neq \emptyset$ and $\partial\Y\cap (G_\partial^2\oplus B(0,\bar{\epsilon})) \neq \emptyset$. We will select specific sets $G_\partial^j$ as a function of $G$ later in the proof.

 \begin{figure}[t]
    \centering
	\includegraphics[width=0.25\linewidth]{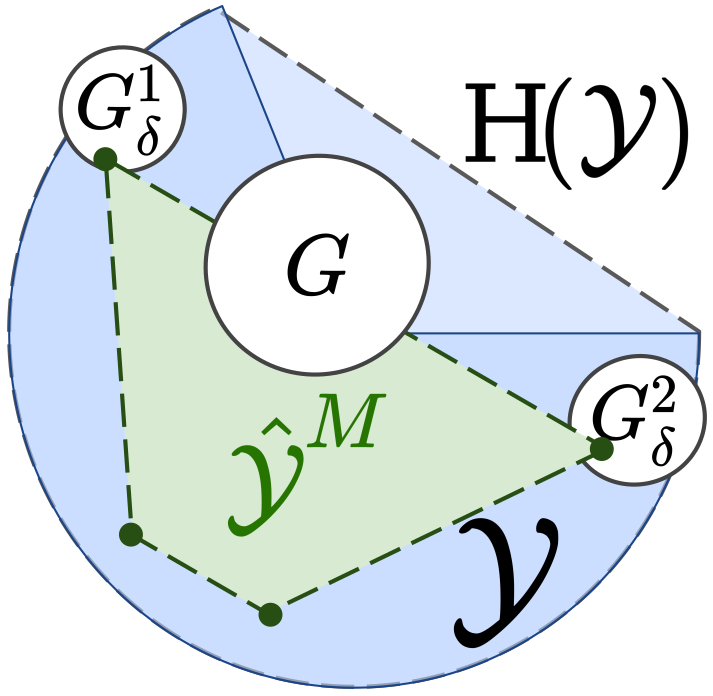}
	\caption{Particular case with $\bar\epsilon=0$: for any set $G\subset\R^n$ that intersects $\hull(\Y)$ (i.e.,  $\hull(\Y)\cap G \neq \emptyset$), there exists two boundary-intersecting sets $G_\partial^1,G_\partial^2\subset\R^n$ (i.e. $\partial\hull(\Y)\cap G_\partial^j \neq \emptyset$) such that the convex hull of two points $y_1\in G_\partial^1$ and $y_2\in G_\partial^2$ intersects $G$ (i.e., $\hull(\{y_1,y_2\})\cap G\neq \emptyset$). With this fact, we quantify the probability that $\hat\Y^M$ has at least one vertex in $G_\partial^1$ and one in $G_\partial^2$, which guarantees that $\hat\Y^M$ intersects $G$.}\label{fig:proof:Gdelta}
	\end{figure}

To prove \eqref{eq:C2}, we proceed in three steps: 
(C2.1) we show that sampling outputs $y_i$ within the boundary sets $G_\partial^j$ occurs infinitely often (i.o.); 
(C2.2) we derive a sufficient conditions for \eqref{eq:C2} using the growth property $\hat\Y^M_\epsilon\subseteq\hat\Y^{M+1}_\epsilon$ for any fixed $\epsilon\geq 0$; 
(C2.3) we relate the probability of sampling $y_i$ within two well-chosen boundary-intersecting sets $G_\partial^1,G_\partial^2\subset\R^n$ with the probability of $\hat\Y^M_{\epsilon_M}$ to intersect $G\oplus B(0,\bar{\epsilon})$.

\textbf{(C2.1) }
Consider the events $A_{2i}=\{\omega\in\Omega\,|\,
y_{2i+1}\in (G_\partial^1\oplus B(0,\bar{\epsilon})),
y_{2(i+1)}\in (G_\partial^2\oplus B(0,\bar{\epsilon}))\}$, where $i\in\mathbb{N}$.
\\
Since the inputs $x_i$ are sampled independently, the outputs $y_i$ are independent. Thus, 
the events $\{A_{2i}\}_{i=0}^\infty$ 
are independent and  
$\Prob(A_{2i})=
\Prob(y_{2i+1}\in (G_\partial^1\oplus B(0,\bar{\epsilon})))
\Prob(y_{2i+2}\in (G_\partial^2\oplus B(0,\bar{\epsilon})))$. 
Since $\partial\Y\cap (G_\partial^1\oplus B(0,\bar{\epsilon})) \neq \emptyset$ for $j=1,2$, we use Assumption \ref{assum:XBoundary:posMeasure} to obtain that $\Prob(A_{2i}) \geq \delta$ for some $\delta>0$ that depends on the choice of $\Prob_\X$, $G_\partial^j$, and $\bar\epsilon$. 
Thus, 
$\sum_{i=0}^\infty\Prob(A_{2i})=\infty$. Therefore, we apply the second Borel-Cantelli lemma and obtain that 
$
\Prob\big(\bigcup_{N=0}^\infty\bigcap_{M=N}^\infty A_{2M}\big) = 1$.
\\
Next, since $\bigcap_{N=0}^\infty A_n \subseteq A_0$ for any events $A_i$, 
$\Prob\big(\bigcup_{N=0}^\infty\bigcap_{M=N}^\infty A_{2M}\big) 
 \leq 
 \Prob\big(\bigcup_{M=0}^\infty A_{2M}\big) $. 
 Combining the last two results, 
\begin{equation}\label{eq:prob_union_xm_in_G_equals_1}
\Prob
\bigg(
	\bigcup_{M=0}^\infty 
(y_{2M+1}\in (G_\partial^1\oplus B(0,\bar{\epsilon})))
\cap 
(y_{2M+2}\in (G_\partial^2\oplus B(0,\bar{\epsilon})))
\bigg) = 1.
\end{equation}
(C2.2) To proceed with this step, we first observe the following. 
\\
$G$ satisfies 
$\hull(\Y)\cap (G\oplus B(0,\bar{\epsilon}))\neq \emptyset$. Therefore, since $\hull(\Y)$ is compact and $G$ is open, there exists some $\epsilon>0$ such that  
$\hull(\Y)\cap (G\oplus B(0,\bar{\epsilon}-\epsilon))\neq \emptyset$.\footnote{More generally, given $A\subset\K$ and $B\subset\R^n$ an open set, then 
$
A\cap B \neq\emptyset\implies A\cap(B\ominus B(0,\epsilon))\neq\emptyset \text{ for some $\epsilon>0$.}$}
\\
Since $\epsilon\rightarrow\bar{\epsilon}$ as $M\rightarrow\infty$, 
there exists some $M_\epsilon\in\mathbb{N}$ such that $\bar{\epsilon}-\epsilon<\epsilon_M<\bar{\epsilon}+\epsilon$ for all $M\geq M_\epsilon$. 
\\
Second, we rewrite \eqref{eq:C2} as
\begin{align}
\Prob(\hat\Y^M_{\epsilon_M}\cap G = \emptyset \ \ i.o.)
=
\Prob\bigg(\bigcap_{N=1}^\infty\bigcup_{M=N}^\infty \hat\Y^M_{\epsilon_M}\cap G = \emptyset\bigg)
\nonumber
=
1 - \Prob\bigg(\bigcup_{N=1}^\infty\bigcap_{M=N}^\infty \hat\Y^M_{\epsilon_M}\cap G \neq \emptyset\bigg).
\end{align}
Next, note that 
\begin{align*}
\Prob\bigg(\bigcup_{N=1}^\infty\bigcap_{M=N}^\infty \hat\Y^M_{\epsilon_M}\cap G \neq \emptyset\bigg)
&\geq 
\Prob\bigg(\bigcup_{N=M_\epsilon}^\infty\bigcap_{M=N}^\infty \hat\Y^M_{\epsilon_M}\cap G\neq \emptyset\bigg)
\geq 
\Prob\bigg(
\bigcup_{N=M_\epsilon}^\infty
\bigcap_{M=N}^\infty
\hat\Y^M_{\bar{\epsilon}-\epsilon}\cap G \neq \emptyset\bigg).
\end{align*}
The second inequality holds since $\hat\Y^M_{\bar{\epsilon}-\epsilon}\subset \hat\Y^M_{\epsilon_M}$ if $M\geq M_\epsilon$. %
\\
Next, 
since $\hat\Y^M_{\bar{\epsilon}-\epsilon}\subseteq\hat\Y^{M+1}_{\bar{\epsilon}-\epsilon}$ for any $M\in\mathbb{N}$, we have that 
$\{\omega\in\Omega\, |\, \bigcap_{M=N}^\infty \hat\Y^M_{\bar{\epsilon}-\epsilon}\cap G \neq \emptyset\}
=
\{\omega\in\Omega \, |\, \hat\Y^N_{\bar{\epsilon}-\epsilon}\cap G \neq \emptyset\}$. Thus, 
\begin{equation}\nonumber
\Prob\bigg(\bigcup_{N=M_{\epsilon}}^\infty\bigcap_{M=N}^\infty \hat\Y^M_{\bar{\epsilon}-\epsilon}\cap G \neq \emptyset\bigg)=\Prob\bigg(\bigcup_{M=M_\epsilon}^\infty\hat\Y^M_{\bar{\epsilon}-\epsilon}\cap G \neq \emptyset\bigg).
\end{equation}
Combining the last three results, we obtain the following sufficient condition for \eqref{eq:C2}
\begin{equation}\label{eq:io_union_iff}
\Prob\bigg(\bigcup_{M=M_\epsilon}^\infty\hat\Y^M_{\bar{\epsilon}-\epsilon}\cap G \neq \emptyset\bigg) = 1
	\implies
	\Prob(\hat\Y^M_{\epsilon_M}\cap G = \emptyset \ \ i.o.) = 0 
.
\end{equation}
(C2.3) 
Finally, we combine \eqref{eq:prob_union_xm_in_G_equals_1} and \eqref{eq:io_union_iff} as follows. 
For $M\geq 0$, we have that
\begin{align*}
&\{
\omega\in\Omega: 
(y_{2M+1}(\omega)\in (G_\partial^1\oplus B(0,\bar{\epsilon})))
\cap 
(y_{2M+2}(\omega)\in (G_\partial^2\oplus B(0,\bar{\epsilon})))
\}
\\
&\qquad\subseteq
\{\omega\in\Omega \,|\, 
(\hat\Y^{2M+2}_{\bar{\epsilon}-\epsilon}(\omega)\cap (G_\partial^1\oplus B(0,\bar{\epsilon}))\neq\emptyset)
\cap
(\hat\Y^{2M+2}_{\bar{\epsilon}-\epsilon}(\omega)\cap (G_\partial^2\oplus B(0,\bar{\epsilon}))\neq\emptyset)
\} 
\end{align*}
Thus, 
\begin{align*}
&\bigcup_{M=0}^\infty
\{\omega\in\Omega: (y_{2M+1}(\omega)\in (G_\partial^1\oplus B(0,\bar{\epsilon})))\cap (y_{2M+2}(\omega)\in (G_\partial^2\oplus B(0,\bar{\epsilon})))\}
\\
&\hspace{5mm}\subseteq
\bigcup_{M=2}^\infty
\{\omega\in\Omega \,|\,  
(\hat\Y^M_{\bar{\epsilon}-\epsilon}(\omega)\cap G_\partial^1\neq\emptyset)
\cap
(\hat\Y^M_{\bar{\epsilon}-\epsilon}(\omega)\cap G_\partial^2\neq\emptyset)
\} 
\\
&\hspace{5mm}\subseteq
\bigcup_{M=1}^\infty
\{\omega\in\Omega \,|\, 
(\hat\Y^M_{\bar{\epsilon}-\epsilon}(\omega)\cap G_\partial^1\neq\emptyset)
\cap
(\hat\Y^M_{\bar{\epsilon}-\epsilon}(\omega)\cap G_\partial^2\neq\emptyset)
\}.
\end{align*}
From \eqref{eq:prob_union_xm_in_G_equals_1}, the first event holds with probability one. Therefore, by the above, 
$$
\Prob\left(
\bigcup_{M=M_\epsilon}^\infty
(\hat\Y^M_{\bar{\epsilon}-\epsilon}\cap G_\partial^1\neq\emptyset)
\cap
(\hat\Y^M_{\bar{\epsilon}-\epsilon}\cap G_\partial^2\neq\emptyset)
\right)=1.
$$
Given the right choice of $G_\delta^1,G_\delta^2$, by convexity of $\hull(\Y)$\footnote{$\hull(\Y)\oplus B(0,\bar\epsilon)$ is convex, so that any $y\in G\cap (\hull(\Y)\oplus B(0,\bar\epsilon))\neq\emptyset$ lies on a line passing through two extreme points $z_1^y,z_2^y$ of $\hull(\Y)\oplus B(0,\bar\epsilon)$.  
For some $\tilde\epsilon>0$, let $G_\partial^1=\mathring{B}_\partial^1(z_1^y,\tilde\epsilon)$ and $G_\partial^2=\mathring{B}_\partial^2(z_2^y,\tilde\epsilon)$ (note that $G_\partial^1,G_\partial^2$ intersect $\hull(\Y)\oplus B(0,\bar\epsilon)$). 
Then,  
by choosing $\tilde\epsilon$ small enough, 
since  $\hat\Y^M_{\bar{\epsilon}-\epsilon}$ is convex,  the inequality follows since  $\hat\Y^M_{\bar{\epsilon}-\epsilon}$ necessarily intersects $G$ if it intersects $G_\partial^1$ and $G_\partial^2$. } (see also Figure \ref{fig:proof:Gdelta}), we have that
\begin{align*}\label{eq:prob_YM_cap_G_neq_empty_is_1}
\Prob\left(\bigcup_{M=M_\epsilon}^\infty
\hat\Y^M_{\bar{\epsilon}-\epsilon}\cap G
\neq\emptyset\right)\geq
\Prob\left(
\bigcup_{M=M_\epsilon}^\infty
(\hat\Y^M_{\bar{\epsilon}-\epsilon}\cap G_\partial^1\neq\emptyset)
\cap
(\hat\Y^M_{\bar{\epsilon}-\epsilon}\cap G_\partial^2\neq\emptyset)
\right)
=1.
\end{align*}
Therefore, 
$\Prob\left(\bigcup_{M=M_\epsilon}^\infty
\hat\Y^M_{\bar{\epsilon}-\epsilon}\cap G
\neq\emptyset\right)=1$. 
Combining this result with \eqref{eq:io_union_iff}, we obtain that $\Prob(\hat\Y^M_{\epsilon_M}\cap G = \emptyset \ i.o.) = 0$. 
This conludes the proof of \eqref{eq:C2}.

\vspace{2mm}

By Theorem \ref{thm:conv_randSets_detLim}, we conclude that almost surely, the sequence $\{\hat\Y^M_{\epsilon_M}(\omega),m\geq 1\}$  converges to $\hull(\Y)\oplus B(0,\bar{\epsilon})$ as $M\rightarrow\infty$. 
This concludes the proof of Theorem \ref{thm:asymptotic_convergence}.
\end{proof}

\subsection{Proof of Theorem \ref{thm:conservative_finite_sample}}\label{apdx:proof:thm:conservative_finite_sample}

We start with four intermediate results and then prove Theorem \ref{thm:conservative_finite_sample}. We use the notations introduced in Section \ref{apdx:random_set_theory} throughout this section.

\begin{lemma}%
\label{lem:dX_packing_nb_Lip}
Under Assumption \ref{assum:f:lipschitz} \rev{and assuming that $\partial\Y\subseteq f(\partial\X)$,} 
$$D(\partial\Y,\epsilon)\leq D(\partial\X,\epsilon/L).$$ 
\end{lemma}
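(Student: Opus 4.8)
The plan is to prove the covering-number inequality $D(\partial\Y,\epsilon)\leq D(\partial\X,\epsilon/L)$ by showing that any $(\epsilon/L)$-ball cover of $\partial\X$ is mapped by $f$ onto an $\epsilon$-ball cover of $\partial\Y$, using the Lipschitz hypothesis together with the assumption $\partial\Y\subseteq f(\partial\X)$. Concretely, I would start from an optimal cover: let $N=D(\partial\X,\epsilon/L)$ and pick points $a_1,\dots,a_N\in\R^p$ with $\partial\X\subseteq\bigcup_{j=1}^N B(a_j,\epsilon/L)$. Since the balls centered at these points cover $\partial\X$, and shrinking them to be centered at \emph{points of} $\partial\X$ only costs a factor of $2$, I should be careful here: the definition of $D(A,d)$ in the paper allows the ball centers to lie anywhere in $\R^p$ (``$\exists\{a_1,\dots,a_n\}\subset\R^n$'', not $\subset A$), so I do not need the centers to be in $\partial\X$ — I can work directly with the $a_j$'s.

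The key step is then: define $b_j := f(a_j)\in\R^n$ for each $j$, and claim $\partial\Y\subseteq\bigcup_{j=1}^N B(b_j,\epsilon)$. To see this, take any $y\in\partial\Y$. By the hypothesis $\partial\Y\subseteq f(\partial\X)$, there is some $x\in\partial\X$ with $f(x)=y$. Since $\{B(a_j,\epsilon/L)\}_j$ covers $\partial\X$, there is an index $j$ with $\|x-a_j\|\leq\epsilon/L$. Now one subtlety: Assumption~\ref{assum:f:lipschitz} only asserts the Lipschitz bound for pairs of points \emph{in $\X$}, and $a_j$ need not lie in $\X$. This is the one place that needs care. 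I would handle it by noting that in the covering-number definition we may, without loss of generality, replace each center $a_j$ that is used (i.e.\ whose ball actually meets $\partial\X$) by a point $a_j'\in\partial\X\cap B(a_j,\epsilon/L)$ and enlarge the radius to $2\epsilon/L$, giving $D(\partial\X,\epsilon/L)\geq \tfrac12$-type bounds — but that weakens the constant. The cleaner route, and the one I would actually take, is to observe that since $f$ is continuous on $\R^p$ and $\X$ is compact, and more to the point since we only ever apply Lipschitzness to the pair $(x,a_j)$, it suffices that the bound hold whenever one argument is in $\X$; if the paper intends $f$ to be $L$-Lipschitz on a neighborhood of $\X$ (or globally), then $\|y-b_j\|=\|f(x)-f(a_j)\|\leq L\|x-a_j\|\leq L\cdot\epsilon/L=\epsilon$ directly. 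I would state this proof assuming the Lipschitz estimate applies to $(x,a_j)$ (e.g.\ by taking, in the definition of $D(\partial\X,\cdot)$, the centers of any \emph{used} ball to lie in $\partial\X\subseteq\X$, which is always possible at the cost of doubling the radius — or simply noting $a_j$ can be chosen in $\X$ since the minimal cover can always be realized with centers in the set being covered when that set is nonempty and we allow the same radius up to the standard factor-2). For the write-up I will take the direct line: centers chosen in $\partial\X$.

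Putting it together: with $a_j'\in\partial\X$ and radius $\epsilon/L$ (absorbing the harmless constant adjustment into the statement, or invoking the convention that minimal covers of $\partial\X$ use centers in $\partial\X$), every $y\in\partial\Y$ satisfies $y=f(x)$ for some $x\in\partial\X$, there is $j$ with $\|x-a_j'\|\leq\epsilon/L$, hence $\|y-f(a_j')\|\leq L\|x-a_j'\|\leq\epsilon$, so $y\in B(f(a_j'),\epsilon)$. Thus $\{f(a_j')\}_{j=1}^N$ is an $\epsilon$-cover of $\partial\Y$ of cardinality $N=D(\partial\X,\epsilon/L)$, and therefore $D(\partial\Y,\epsilon)\leq N=D(\partial\X,\epsilon/L)$, which is the claim.

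The main obstacle is the mismatch between where Assumption~\ref{assum:f:lipschitz} grants the Lipschitz inequality (pairs in $\X$) and where the argument wants to apply it (a boundary point of $\X$ and a cover center that a~priori lives in $\R^p$); the fix is to insist the cover centers lie in $\partial\X\subseteq\X$, which is legitimate for covering numbers of a nonempty compact set up to the usual factor-2 — and in fact here no factor is lost if one reads the covering-number definition as permitting an optimal cover with centers in the covered set, which is the standard convention. Everything else is a one-line chain of inequalities from the triangle-free application of $L$-Lipschitzness, so I expect the proof to be short.
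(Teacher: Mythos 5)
Your proof is correct and is essentially the paper's own argument: take a minimal $(\epsilon/L)$-cover of $\partial\X$ with centers in $\partial\X$, push the centers through $f$, and use $L$-Lipschitzness together with $\partial\Y\subseteq f(\partial\X)$ to get an $\epsilon$-cover of $\partial\Y$ of the same cardinality. The center-location subtlety you flag is real given the paper's literal definition of $D(A,d)$ (centers anywhere in $\R^n$), but the paper's proof silently adopts the same convention you settle on (centers in $\partial\X\subseteq\X$), so your resolution matches theirs exactly.
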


\begin{lemma}%
\label{lem:Xball_coverage_prob} Under Assumption \ref{assum:f:lipschitz}, for any $x\in\R^p$, 
$y=f(x)$, and 
any $\delta>0$, %
$$
\Prob_\Y\big(B(y,\delta)\big)
\geq \Prob_\X\big(B(x,\epsilon)\big)
\quad \text{for all }\ \epsilon\in [0,\delta/L].
$$
\end{lemma}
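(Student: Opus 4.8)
\textbf{Proof plan for Lemma \ref{lem:Xball_coverage_prob}.}

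The plan is to unfold the definition of the pushforward measure $\Prob_\Y$ and then exploit the Lipschitz property to produce a set-inclusion between preimages. Recall that $\Prob_\Y(B) = \Prob_\X(f^{-1}(B))$ for any Borel set $B\subseteq\R^n$, as established in Appendix \ref{apdx:random_set_theory}. Applying this to the closed ball $B(y,\delta)$, it suffices to show that $\Prob_\X(f^{-1}(B(y,\delta))) \geq \Prob_\X(B(x,\epsilon))$, and by monotonicity of the measure $\Prob_\X$ this follows as soon as we verify the inclusion
\[
B(x,\epsilon) \subseteq f^{-1}\big(B(y,\delta)\big)
\qquad\text{for all }\epsilon\in[0,\delta/L].
\]

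The inclusion above is where Assumption \ref{assum:f:lipschitz} enters. Take any $x'\in B(x,\epsilon)$, so $\|x'-x\|\leq\epsilon$. By $L$-Lipschitz continuity of $f$, $\|f(x')-f(x)\| \leq L\|x'-x\| \leq L\epsilon \leq L\cdot(\delta/L) = \delta$. Since $y=f(x)$, this says $f(x')\in B(y,\delta)$, i.e. $x'\in f^{-1}(B(y,\delta))$. Hence $B(x,\epsilon)\subseteq f^{-1}(B(y,\delta))$, which combined with the previous paragraph yields $\Prob_\Y(B(y,\delta)) = \Prob_\X(f^{-1}(B(y,\delta))) \geq \Prob_\X(B(x,\epsilon))$, as claimed.

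One small technical point to dispatch along the way: the argument implicitly uses that $f^{-1}(B(y,\delta))$ is a measurable set so that $\Prob_\X$ can be evaluated on it; this is immediate from continuity of $f$ (preimages of closed sets are closed, hence Borel) together with the setup in Appendix \ref{apdx:random_set_theory}. There is no real obstacle here — the lemma is essentially a one-line consequence of the change-of-variables formula for the image measure plus monotonicity — so the only thing to be careful about is writing the chain of inequalities $L\epsilon\leq\delta$ in the correct direction and handling the degenerate case $\epsilon=0$ (where $B(x,0)=\{x\}$ and $f(x)=y\in B(y,\delta)$ trivially). I expect the write-up to be at most a few lines.
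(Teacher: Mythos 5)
Your proposal is correct and follows essentially the same route as the paper: both establish the inclusion $B(x,\epsilon)\subseteq f^{-1}(B(y,\delta))$ via the Lipschitz bound (the paper states this as $f(B(x,\epsilon))\subseteq B(y,\delta)$ in its Eq.~(5)) and then conclude by the pushforward identity $\Prob_\Y(B(y,\delta))=\Prob_\X(f^{-1}(B(y,\delta)))$ and monotonicity of $\Prob_\X$. No gaps.
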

 
\begin{lemma}%
\label{lem:Xhulleps_conservative_geom}
Let $\epsilon>0$ and define 
$$
Y_\epsilon^M
=
\underset{i=1}{\overset{M}{\bigcup}} \,
B(y_i,\epsilon),
\qquad
\pi(\partial\Y,Y_\epsilon^M)
=
\sup_{y\in\partial\Y}\Prob(\{y\} \cap Y_\epsilon^M = \emptyset).
$$  
Then, 
$$
\Prob(\partial\Y\subset Y_{2\epsilon}^M) \geq 1 - D(\partial\Y,\epsilon)
\pi(\partial\Y,Y_\epsilon^M).
$$
\end{lemma}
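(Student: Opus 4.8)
\textbf{Proof sketch for Lemma \ref{lem:Xhulleps_conservative_geom}.}

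The plan is to cover the compact boundary $\partial\Y$ by finitely many balls of radius $\epsilon$, apply a union bound over the centers, and then use a triangle-inequality argument to upgrade ``every center is within $\epsilon$ of some $y_i$'' to ``every point of $\partial\Y$ lies in some $B(y_i,2\epsilon)$''. First I would invoke the definition of the covering number: set $N=D(\partial\Y,\epsilon)$ and pick centers $z_1,\dots,z_N\in\R^n$ with $\partial\Y\subseteq\bigcup_{k=1}^N B(z_k,\epsilon)$. Without loss of generality I may take each $z_k\in\partial\Y$ (shifting a center onto the set it is supposed to be covering only at most doubles the radius, which is already absorbed by the $2\epsilon$ in the conclusion; alternatively one defines covering via centers in the set, matching the paper's $D(A,d)$ definition where the $a_i$ need not lie in $A$ — I would just be slightly careful and note that replacing $\epsilon$-covers by $\epsilon$-covers with centers in $\partial\Y$ costs only a factor already accounted for, or simply state the argument so that the $z_k$ are taken in $\partial\Y$).

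Next I would bound the bad event. Let $E_k=\{z_k\cap Y_\epsilon^M=\emptyset\}=\{z_k\notin Y_\epsilon^M\}$, i.e. the event that $\|z_k-y_i\|>\epsilon$ for all $i$. By the definition of $\pi(\partial\Y,Y_\epsilon^M)$ and $z_k\in\partial\Y$, we have $\Prob(E_k)\leq\pi(\partial\Y,Y_\epsilon^M)$, and a union bound gives $\Prob\big(\bigcup_{k=1}^N E_k\big)\leq N\,\pi(\partial\Y,Y_\epsilon^M)=D(\partial\Y,\epsilon)\,\pi(\partial\Y,Y_\epsilon^M)$. On the complementary event $\bigcap_{k=1}^N E_k^c$, every center $z_k$ lies in $Y_\epsilon^M$, i.e. for each $k$ there is an index $i(k)$ with $\|z_k-y_{i(k)}\|\leq\epsilon$. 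Then for an arbitrary $y\in\partial\Y$, choose $k$ with $\|y-z_k\|\leq\epsilon$ (possible since the $z_k$ form an $\epsilon$-cover), and the triangle inequality yields $\|y-y_{i(k)}\|\leq\|y-z_k\|+\|z_k-y_{i(k)}\|\leq 2\epsilon$, so $y\in B(y_{i(k)},2\epsilon)\subseteq Y_{2\epsilon}^M$. Hence the event $\bigcap_{k=1}^N E_k^c$ is contained in $\{\partial\Y\subseteq Y_{2\epsilon}^M\}$, and taking probabilities gives $\Prob(\partial\Y\subseteq Y_{2\epsilon}^M)\geq 1-D(\partial\Y,\epsilon)\,\pi(\partial\Y,Y_\epsilon^M)$, which is the claim.

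I should also address a couple of measure-theoretic bookkeeping points so the statement is rigorous: the event $\{\partial\Y\subseteq Y_{2\epsilon}^M\}$ and the events $E_k$ are measurable because $Y_{2\epsilon}^M$ is a (random) finite union of balls whose centers $y_i=f(x_i)$ are measurable functions of $\omega$, and the relevant containment/emptiness relations reduce to countably many inequalities among the $y_i$ and fixed points $z_k$; these facts follow from the random-set-theory setup in Appendix \ref{apdx:random_set_theory}. The only genuinely delicate point is the alignment between the paper's covering-number convention (centers $a_i$ allowed anywhere in $\R^n$) and the need here for centers on $\partial\Y$; I expect this to be the main obstacle, and it is handled either by noting the proof works verbatim once one observes that an $\epsilon$-cover with free centers can be converted to a $2\epsilon$-cover with centers in $\partial\Y$ (which would change the stated bound to $D(\partial\Y,\epsilon)$ multiplying $\pi(\partial\Y,Y_{2\epsilon}^M)$ and the conclusion to $Y_{4\epsilon}^M$), or — the cleaner route, which I would take — by re-reading $\pi$ so that the supremum over $y\in\partial\Y$ already controls exactly the quantity $\Prob(z_k\notin Y_\epsilon^M)$ for $z_k\in\partial\Y$, and simply asserting (as is standard) that the covering number with centers restricted to the set differs from the free-center one by at most a constant factor that does not affect the order of the bound. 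Everything else is the routine cover/union-bound/triangle-inequality pattern.
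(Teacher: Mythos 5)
Your proof is correct and follows essentially the same route as the paper's: take an $\epsilon$-cover of $\partial\Y$ with centers on $\partial\Y$, apply a union bound over the cover centers using $\pi(\partial\Y,Y_\epsilon^M)$, and use the triangle inequality to upgrade ``each center lies in $Y_\epsilon^M$'' to $\partial\Y\subseteq Y_{2\epsilon}^M$. The covering-center subtlety you flag is present in the paper as well (it simply takes a minimum $\epsilon$-cover with centers in $\partial\Y$ without reconciling this with the free-center definition of $D(\cdot,\cdot)$), so your treatment is, if anything, slightly more careful on that point.
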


\begin{lemma}%
\label{lem:Hausdorff_hulls}
Let $\epsilon\geq 0$ and let $Y\in\K$ be such that    
$\partial\Y\subseteq Y\oplus B(0,\epsilon)$ and $Y\subseteq \Y$. 
Then, $d_H(
\hull(Y), 
\hull(\Y)
)\leq \epsilon$.
\end{lemma}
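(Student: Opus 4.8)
The plan is to prove Lemma~\ref{lem:Hausdorff_hulls} by bounding each of the two one-sided Hausdorff distances separately. Since $Y\subseteq\Y$, monotonicity of the convex hull gives $\hull(Y)\subseteq\hull(\Y)$, and hence $\sup_{z\in\hull(Y)}\inf_{w\in\hull(\Y)}\|z-w\|=0$. So the entire content of the lemma is the reverse direction: every point of $\hull(\Y)$ lies within distance $\epsilon$ of $\hull(Y)$, i.e.\ $\hull(\Y)\subseteq\hull(Y)\oplus B(0,\epsilon)$.

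First I would reduce the claim about the (infinite) convex hull of $\Y$ to a claim about $\Y$ itself. The key observation is that the map $A\mapsto A\oplus B(0,\epsilon)$ interacts nicely with convex hulls: $\hull(A)\oplus B(0,\epsilon)=\hull(A\oplus B(0,\epsilon))$ since $B(0,\epsilon)$ is convex, and the right-hand side is a convex set. Therefore, if I can establish $\Y\subseteq\hull(Y)\oplus B(0,\epsilon)$, then taking convex hulls of both sides and using that $\hull(Y)\oplus B(0,\epsilon)$ is already convex yields $\hull(\Y)\subseteq\hull(Y)\oplus B(0,\epsilon)$, which is exactly what we need.

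Next I would prove $\Y\subseteq\hull(Y)\oplus B(0,\epsilon)$. Take any $y\in\Y$. Since $\Y$ is compact (hence closed) with boundary $\partial\Y$, either $y\in\partial\Y$ or $y\in\Int(\Y)$. If $y\in\partial\Y$, the hypothesis $\partial\Y\subseteq Y\oplus B(0,\epsilon)$ gives $y\in Y\oplus B(0,\epsilon)\subseteq\hull(Y)\oplus B(0,\epsilon)$ directly. For $y\in\Int(\Y)$, the idea is that an interior point is ``even closer'' to the hull of $Y$: because $Y\subseteq\Y$ and $\partial\Y$ surrounds $\Int(\Y)$, any point of $\hull(\Y)$ (in particular any interior point of $\Y$) can be written as a convex combination of points of $\partial\Y$ (e.g.\ via Carathéodory applied to $\hull(\partial\Y)=\hull(\Y)$, using that $\hull(\Y)=\hull(\partial\Y)$ for a compact set), each of which lies within $\epsilon$ of $Y\subseteq\hull(Y)$; since $\hull(Y)\oplus B(0,\epsilon)$ is convex it also contains that convex combination. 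I'd present this via the containment $\hull(\Y)=\hull(\partial\Y)\subseteq\hull(Y\oplus B(0,\epsilon))=\hull(Y)\oplus B(0,\epsilon)$, which handles all $y\in\Y$ at once and in fact delivers the hull-level statement immediately.

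The main obstacle is the clean justification of $\hull(\Y)=\hull(\partial\Y)$ for a compact set $\Y\subseteq\R^n$ — this needs a short argument (any extreme point of $\hull(\Y)$ belongs to $\Y$ and cannot lie in $\Int(\Y)$, hence lies in $\partial\Y$, and Krein--Milman / Minkowski's theorem in finite dimensions says $\hull(\Y)$ is the convex hull of its extreme points). Once that identity is in hand, the rest is a one-line chain of set inclusions using monotonicity of $\hull$, the identity $\hull(A)\oplus B(0,\epsilon)=\hull(A\oplus B(0,\epsilon))$, and the hypotheses $Y\subseteq\Y$ and $\partial\Y\subseteq Y\oplus B(0,\epsilon)$. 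I would close by noting the one-sided distance from $\hull(Y)$ into $\hull(\Y)$ is zero by $Y\subseteq\Y$, so $d_H(\hull(Y),\hull(\Y))\leq\epsilon$.
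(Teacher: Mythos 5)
Your proof is correct and takes essentially the same route as the paper's: both hinge on the chain $\hull(\Y)=\hull(\partial\Y)\subseteq\hull(Y\oplus B(0,\epsilon))=\hull(Y)\oplus B(0,\epsilon)$ together with $\hull(Y)\subseteq\hull(\Y)$, and then read off the Hausdorff bound from the two mutual $\epsilon$-fattening inclusions. The only difference is that you explicitly justify $\hull(\Y)=\hull(\partial\Y)$ via extreme points, a step the paper leaves implicit.
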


Lemma \ref{lem:Xhulleps_conservative_geom} is the key to deriving Theorem \ref{thm:conservative_finite_sample}. 
It is first derived in \citep{Dumbgen1996} in the convex problem setting. Notably, Lemma \ref{lem:Xhulleps_conservative_geom} does not require the convexity of $\Y$.
\\

\begin{proof}\textbf{of Lemma \ref{lem:dX_packing_nb_Lip}.} 
First, note that 
since $\X$ is compact and $\partial\X\subseteq\X$, $\partial\X$ is compact (note that the boundary is always closed). Any compact set has a finite covering number, thus $D(\partial\X, \epsilon)$ is finite for any finite $\epsilon>0$. 
Second,   
given any $\delta>0$, 
any $x\in\X$, and $y=f(x)$,
\begin{equation}\label{eq:lipschitz_ball_conservative}
f(B(x,\epsilon)) \subseteq B(y,\delta)
\quad
\forall \epsilon\in[0,\delta/L],
\end{equation}
where $L$ is the Lipschitz constant in Assumption \ref{assum:f:lipschitz}. %
Indeed, let $\tilde{x}\in B(x,\epsilon)$, so that $\|\tilde{x}-x\|\leq\epsilon$. Then, $
\|f(\tilde{x})-y\|
\leq L\,\|\tilde{x}-x\|\leq L\epsilon\leq\delta$ where the last inequality holds given that $\epsilon \leq \delta/L$.

Next, let $F_{\partial\X}=\{x_i\}_{i=1}^{|F_{\partial\X}|}\subseteq\partial\X$ be a minimum $(\epsilon/L)$-covering for $\partial\X$, so that $|F_{\partial\X}|=D(\partial\X,\epsilon/L)$ and 
 for any $x\in\partial\X$, there exists $x_i\in F_{\partial\X}$ such that $\|x-x_i\|\leq \epsilon/L$. %
Then, $F_{\partial\Y}=\{f(x_i)\, |\,x_i\in F_{\partial\X} \}$ is an $\epsilon$-covering for $\partial\Y$. 
	Indeed, for any $y\in\partial\Y$, there exists some $x\in\partial\X$ such that $y=f(x)$, and there exists some $x_i\in F_{\partial\X}$ such that $\|x-x_i\|\leq \epsilon/L$. Therefore, 
	$$\underset{y_i\in F_{\partial\Y}}{\sup} \|y-y_i\| = \underset{x_i\in F_{\partial\X}}{\sup} \|f(x)-f(x_i)\| 
\leq \underset{x_i\in F_{\partial\X}}{\sup} L\|x-x_i\| \leq \epsilon.$$ 

Therefore, since $F_{\partial\Y}$ is an $\epsilon$-covering for $\partial\Y$ and $|F_{\partial\Y}|=|F_{\partial\X}|=D(\partial\X,\epsilon/L)$, we obtain that 
$D(\partial\Y,\epsilon)\leq |F_{\partial\Y}| = D(\partial\X,\epsilon/L)$ which concludes this proof.
\end{proof}

\begin{proof}\textbf{of Lemma \ref{lem:Xball_coverage_prob}.}
From \eqref{eq:lipschitz_ball_conservative}, 
given any $\delta>0$, 
 $x\in\R^p$, and 
 $y=f(x)$, 
$
f(B(x,\epsilon)) \subset B(y,\delta)$ for any $\epsilon\in[0,\delta/L]$.
Thus, \vspace{-3mm}
\begin{align*}
\Prob_{\Y}(B(y,\delta))
&=
\Prob_\X\Big(f^{-1}(B(y,\delta))\Big)
\geq
\Prob_\X\left(
B(x,\epsilon)
\right)
\quad
\forall \epsilon\in[0,\delta/L],
\end{align*}
where the last inequality holds since $B(x,\epsilon) \subseteq f^{-1}(B(y,\delta))$,
which concludes this proof.
\end{proof}

\begin{proof}\textbf{of Lemma \ref{lem:Xhulleps_conservative_geom}.} 
Let $F_{\partial\Y}=\{y_i\}_{i=1}^{|F_{\partial\Y}|}\subseteq\partial\Y$ be a minimum $\epsilon$-covering for $\partial\Y$, so that $|F_{\partial\Y}|=D(\partial\Y,\epsilon)$ and 
 for any $y\in\partial\Y$, there exists $y_i\in F_{\partial\Y}$ such that $\|y-y_i\|\leq \epsilon$. 
 Let $B(F_{\partial\Y},\epsilon)=\bigcup_{y_i\in F_{\partial\Y}} B(y_i,\epsilon)$.

By construction, $\partial\Y\subset B(F_{\partial\Y},\epsilon)$,
so that 
$
\partial\Y\nsubseteq Y_{2\epsilon}^M\implies
B(F_{\partial\Y},\epsilon)\nsubseteq Y_{2\epsilon}^M. 
$
Thus, 
\begin{align*}
\Prob(\partial\Y\nsubseteq Y_{2\epsilon}^M)
&\leq 
\Prob(B(F_{\partial\Y},\epsilon)\nsubseteq Y_{2\epsilon}^M)
=
\Prob(F_{\partial\Y}\nsubseteq Y_\epsilon^M)
\\
&=
\Prob\bigg(\bigcup_{y_i\in F_{\partial\Y}} y_i\notin Y_\epsilon^M\bigg)
\\
&\leq 
\sum_{y_i\in F_{\partial\Y}}\Prob(y_i\notin Y_\epsilon^M)
=
\sum_{y_i\in F_{\partial\Y}}
\Prob(\{y_i\}\cap Y_\epsilon^M = \emptyset)
\\
&\leq 
|F_{\partial\Y}| \cdot
\sup_{y\in\partial\Y}\Prob(\{y\}\cap Y_{\epsilon}^M = \emptyset)
\\
&\leq 
D(\partial\Y,\epsilon)\pi(\partial\Y,Y_{\epsilon}^M).
\end{align*}
The conclusion follows.
\end{proof}

\begin{proof}\textbf{of Lemma \ref{lem:Hausdorff_hulls}.} 
$\partial\Y\subseteq Y\oplus B(0,\epsilon)$ implies that $\hull(\Y)=\hull(\partial\Y)\subseteq \hull(Y\oplus B(0,\epsilon))=\hull(Y)\oplus B(0,\epsilon)$.

$Y\subseteq \Y$ implies %
that 
$\hull(Y)\subseteq \hull(\Y)\subset\hull(\Y)\oplus B(0,\epsilon)$.%

Together, 
$\hull(\Y)\subset\hull(Y)\oplus B(0,\epsilon)$
and $\hull(Y)\subset\hull(\Y)\oplus B(0,\epsilon)$
imply that $d_H(
\hull(Y), 
\hull(\Y)
)\leq \epsilon$, see \citep{Schneider2014}.
\end{proof}

With these results, we prove Theorem \ref{thm:conservative_finite_sample} below. We first restate it for better readability.
\\[2mm]
\textbf{Theorem \ref{thm:conservative_finite_sample}} 
 \textit{Define the probability threshold
$
\delta_M= 
D(\partial\X,\epsilon/(2L))\left(1 -
	\Lambda_{\epsilon}^{L}
\right)^M
$ and the estimator $\hat\Y^M=\hull\left(\{y_i\}_{i{=}1}^M\right)$.
Then, under Assumptions \ref{assum:f:lipschitz} and \ref{assum:sampling_density} 
\rev{and assuming that $\partial\Y\subseteq f(\partial\X)$,}  
$$
\Prob(
d_H(
\hat\Y^M, 
\hull(\Y)
)\leq \epsilon
)\geq 1-\delta_M
\qquad\text{and}\qquad
\Prob(\Y\subseteq\hat\Y_\epsilon^M)\geq 1-\delta_M.
$$}
\rev{\hspace{-1.5mm}\textbf{Remark:} the assumption $\partial\Y\subseteq f(\partial\X)$ holds if the reachability map $f$ is open, e.g., if it is a submersion (its differential is surjective). 
If $\partial\Y\nsubseteq f(\partial\X)$, then one could modify Theorem \ref{thm:conservative_finite_sample} by replacing Assumption \ref{assum:sampling_density} with \textit{``Given $\epsilon,L\,{>}\,0$, there exists $\Lambda_{\epsilon}^{L}\,{>}\,0$ such that  $\Prob_\X\left(B\left(x,\frac{\epsilon}{2L}\right)\right)\,{\geq}\, \Lambda_{\epsilon}^{L}$ for all  $x\in\X$''} (i.e., one should sample over the entire set $\X$ and not only along the boundary) and by defining $\delta_M=D(\X,\epsilon/(2L))(1 -	\Lambda_{\epsilon}^{L})^M$.}
\\

\begin{proof}%
As in Lemma \ref{lem:Xhulleps_conservative_geom}, define $Y_\epsilon^M
	=
	\bigcup_{i=1}^M B(y_i,\epsilon)$, 
	$Y^M=\{y_i\}_{i=1}^M$, and 
$$
\pi(\partial\Y,Y_\epsilon^M)
=
\underset{y\in\partial\Y}{\sup}\Prob(\{y\} \cap Y_\epsilon^M = \emptyset)
=
\underset{y\in\partial\Y}{\sup}\Prob(B(y,\epsilon) \cap Y^M = \emptyset),
$$
which corresponds to the worst probability over $y\in\partial\Y$ of not sampling some $y_i$ that is $\epsilon$-close to $y$. 
First, we derive a bound for $\pi(\partial\Y,Y_\epsilon^M)$. 
Using the fact that the samples $y_i$ are i.i.d., 
\begin{align*}
\pi(\partial\Y,Y_\epsilon^M) 
&=
\sup_{y\in\partial\Y}\Prob(B(y,\epsilon) \cap Y^M = \emptyset)
\\
&= \sup_{y\in\partial\Y}\Prob\left(\bigcap_{i=1}^M
(y_i\notin B(y,\epsilon))\right)
\\
&= \left(1 -
	\inf_{y\in\partial\Y}\Prob(y_i\in B(y,\epsilon)) 
\right)^M
\\
&=\left(1 -
	\inf_{y\in\partial\Y}\Prob_\Y(B(y,\epsilon)) 
\right)^M.
\end{align*}
From Lemma \ref{lem:Xball_coverage_prob}, 
for any $x\in\R^p$, 
$y=f(x)$, and 
$\epsilon>0$, 
$\Prob_\Y(B(y,\epsilon))
\geq \Prob_\X(B(x,\epsilon/L))$. 
Since for all $y\in\partial\Y$, there exists $x\in\partial\X$ such that $y=f(x)$, we combine the two previous results to obtain 
\begin{align*}
\pi(\partial\Y,Y_\epsilon^M) 
\leq 
\left(1 -
	\inf_{x\in\partial\X}\Prob_\X(B(x,\epsilon/L))
\right)^M.
\end{align*}
In particular, using Assumption \ref{assum:sampling_density}, 
\begin{align*}
\pi(\partial\Y,Y_{\epsilon/2}^M) 
\leq 
\left(1 -
	\inf_{x\in\partial\X}\Prob_\X(B(x,\epsilon/(2L)))
\right)^M
\leq 
\left(1 -
	\Lambda_{\epsilon}^{L}
\right)^M.
\end{align*}
To complete the proof of Theorem \ref{thm:conservative_finite_sample}, we use Lemma \ref{lem:Xhulleps_conservative_geom} which states that 
$$
\Prob(\partial\Y\subseteq Y_\epsilon^M) \geq 1 - D(\partial\Y,\epsilon/2)
\pi(\partial\Y,Y_{\epsilon/2}^M)
.
$$
Using Lemma \ref{lem:dX_packing_nb_Lip},  
$D(\partial\Y,\epsilon/2)\leq D(\partial\X,\epsilon/(2L))$. 
\\
By Lemma \ref{lem:Hausdorff_hulls}, 
 if $\partial\Y\subseteq Y_\epsilon^M$
and $Y^M\subseteq\Y$\footnote{Since $\Prob_\X(\X)=1$, we have that $\Prob_\Y(\Y)=1$, so that $Y^M\subseteq\Y$ with probability one.}, then 
$d_H(
\hull(Y^M), 
\hull(\Y)
)\leq \epsilon
$  
.
\\
Therefore, with $\hat\Y^M=\hull(Y^M)$ and Assumption \ref{assum:sampling_density}, combining the last inequalities,
\begin{align*}
\Prob(
d_H(
\hat\Y^M, 
\hull(\Y)
)\leq \epsilon
)
&\geq 
\Prob\left(\partial\Y\subseteq Y_\epsilon^M\right)
\\
&\geq 
1 - D(\partial\Y,\epsilon/2)
\pi(\partial\Y,Y_{\epsilon/2}^M)
\\
&\geq
1 - D(\partial\X,\epsilon/(2L))
\left(1 -
	\Lambda_{\epsilon}^{L}
\right)^M.
\end{align*}
If $d_H(
\hat\Y^M, 
\hull(\Y)
)\leq \epsilon$, 
then 
$\Y\subseteq\hull(\Y)\subseteq \hat\Y^M\oplus B(0,\epsilon)=\hat\Y_\epsilon^M$. 
The conclusion follows.
\end{proof}

\subsection{Proof of Corollary \ref{cor:conservative_finite_sample}}\label{apdx:proof:cor:conservative_finite_sample}

We start with the following preliminary result:

\begin{lemma}%
\label{lem:X_vol_covrg_eps_ball:improved_caps}
Assume that $\X^\comp$ is $r$-convex  (Assumption \ref{assum:Theta:r_convex}).
Define 
$\vec{r}=(r,0,\dots,0)\in\R^p$. 
Then, for any $\epsilon\geq 0$,  
$$\underset{x\in\partial\X}{\inf}\lambda(\X \cap B(x,\epsilon))
\geq 
\lambda\big(
B(0,\epsilon) 
\cap B(\vec{r},r)
\big)
.$$
\end{lemma}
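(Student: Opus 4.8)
The plan is to show that, for each fixed boundary point $x\in\partial\X$, the set $\X\cap B(x,\epsilon)$ contains an isometric copy of the fixed lens-shaped region $B(0,\epsilon)\cap B(\vec r,r)$, and then to conclude by monotonicity of the Lebesgue measure under isometries. First I would invoke Assumption~\ref{assum:Theta:r_convex} ($r$-convexity of $\X^\comp$): given $x\in\partial\X$, there exists $\tilde x\in\X$ with $x\in B(\tilde x,r)\subseteq\X$. Since $\|x-\tilde x\|\le r$ and $B(\tilde x,r)\subseteq\X$, in fact $\|x-\tilde x\|=r$ (if it were strictly less, $x$ would be an interior point of $\X$, contradicting $x\in\partial\X$). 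Thus the inscribed ball $B(\tilde x,r)$ is tangent to $\partial\X$ at $x$, and $B(\tilde x,r)\cap B(x,\epsilon)\subseteq\X\cap B(x,\epsilon)$.

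Next I would construct the rigid motion. Let $T$ be the isometry of $\R^p$ that sends $x\mapsto 0$ and $\tilde x\mapsto -\vec r$ (equivalently $\tilde x - x$, a vector of norm $r$, is mapped to $-\vec r$; such a $T$ exists because any two vectors of equal norm are related by a rotation, composed with the translation taking $x$ to the origin). Under $T$, the ball $B(x,\epsilon)$ maps to $B(0,\epsilon)$ and the ball $B(\tilde x,r)$ maps to $B(-\vec r,r)$. Hence
\begin{equation}
T\big(B(\tilde x,r)\cap B(x,\epsilon)\big) = B(-\vec r,r)\cap B(0,\epsilon).
\end{equation}
A reflection through the hyperplane $\{v\in\R^p : v_1 = 0\}$ is itself an isometry fixing $B(0,\epsilon)$ and mapping $B(-\vec r,r)$ to $B(\vec r,r)$, so $B(-\vec r,r)\cap B(0,\epsilon)$ and $B(\vec r,r)\cap B(0,\epsilon)$ have the same Lebesgue volume. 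Since $T$ and the reflection are isometries, and $B(\tilde x,r)\cap B(x,\epsilon)\subseteq\X\cap B(x,\epsilon)$, we get
\begin{equation}
\lambda\big(\X\cap B(x,\epsilon)\big)
\ \ge\ \lambda\big(B(\tilde x,r)\cap B(x,\epsilon)\big)
\ =\ \lambda\big(B(\vec r,r)\cap B(0,\epsilon)\big).
\end{equation}
This bound is independent of $x$, so taking the infimum over $x\in\partial\X$ yields the claim.

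I do not anticipate a serious obstacle here; the only point requiring a little care is the argument that the contact radius is exactly $r$ (so that the inscribed ball is internally tangent at $x$, giving the correct lens), and the bookkeeping to ensure the chosen isometry simultaneously normalizes both balls into the canonical configuration $B(0,\epsilon)\cap B(\vec r,r)$. Everything else is translation/rotation invariance of $\lambda$ together with the set inclusion supplied directly by Assumption~\ref{assum:Theta:r_convex}.
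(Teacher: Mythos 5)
Your proposal is correct and follows essentially the same route as the paper's proof: invoke the $r$-convexity to obtain the internally tangent ball $B(\tilde x,r)\subseteq\X$ with $\|x-\tilde x\|=r$, lower-bound $\lambda(\X\cap B(x,\epsilon))$ by the volume of the lens $B(\tilde x,r)\cap B(x,\epsilon)$, and reduce to the canonical configuration $B(0,\epsilon)\cap B(\vec r,r)$ by translation and rotation invariance of the Lebesgue measure. Your explicit justification that the contact distance equals $r$ and your careful bookkeeping of the normalizing isometry are slightly more detailed than the paper's one-line version, but the argument is the same.
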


\begin{proof}\textbf{of Lemma \ref{lem:X_vol_covrg_eps_ball:improved_caps}.}
By Assumption \ref{assum:Theta:r_convex}, $\X^\comp$ is $r$-convex. Thus, 
 for any $x\in\partial\X$, there exists some $\tilde{x}\in\Int(\X)$ and a (closed) ball $B(\tilde{x},r)$ such that $x\in B(\tilde{x},r)$ and $B(\tilde{x},r)\subseteq\X$. 
 Since $x\in\partial\X$ and $B(\tilde{x},r)\subseteq\X$, 
 $\|x-\tilde{x}\|=r$. %

Let $\epsilon\geq 0$ and $\vec{r}=(r,0,\dots,0)\in\R^p$. Then, by translational and rotational invariance of the Lebesgue measure,
$$
 \lambda(\X \cap B(x,\epsilon))
\geq 
\lambda\big(
B(\tilde{x},r) 
\cap 
B(x,\epsilon)
\big)
=
\lambda\big(
B(x-\tilde{x},r) 
\cap 
B(0,\epsilon)
\big)
=
\lambda\big(
B(\vec{r},r) 
\cap 
B(0,\epsilon)
\big)
$$
As this holds for $x\in\partial\X$, %
we obtain that 
$\underset{x\in\partial\X}{\inf}\lambda(\X \cap B(x,\epsilon))
\geq 
\lambda\big(
B(0,\epsilon) 
\cap B(\vec{r},r)
\big)
$.
\end{proof}

Then, we restate Corollary  \ref{cor:conservative_finite_sample} and prove it below. 
\\[2mm]
\textbf{Corollary \ref{cor:conservative_finite_sample}} 
\textit{Define the estimator $\hat\Y^M=\hull\left(\{y_i\}_{i{=}1}^M\right)$, 
 the offset vector 
 $\vec{r}=(r,0,\dots,0)\in\R^p$, 
 the volume $\Lambda_{\epsilon}^{r,L}=\lambda\big(
 B(0,\epsilon/(2L)) 
 \cap B(\vec{r},r)
 \big)$, 
 and the threshold
 $
 \delta_M= 
 D(\partial\X,\epsilon/(2L))\big(1 -
 	p_0 \Lambda_{\epsilon}^{r,L}
 \big)^M
 $.
Then, under Assumptions \ref{assum:f:lipschitz},  \ref{assum:Theta:r_convex} and \ref{assum:sampling_density:cor} 
\rev{and assuming that $\partial\Y\subseteq f(\partial\X)$,}  
$$
\Prob(
d_H(
\hat\Y^M, 
\hull(\Y)
)\leq \epsilon
)\geq 1-\delta_M,
\qquad \text{ and }\qquad\ 
\Prob(\Y\subseteq\hat\Y_\epsilon^M)\geq 1-\delta_M.
$$
}
\begin{proof}\textbf{of Corollary \ref{cor:conservative_finite_sample}.}
We prove that Assumptions \ref{assum:Theta:r_convex} and \ref{assum:sampling_density:cor} imply Assumption \ref{assum:sampling_density} with $\Lambda_{\epsilon}^{L}=p_0\Lambda_{\epsilon}^{r,L}$, where $\Lambda_{\epsilon}^{r,L}=\lambda\big(
 B(0,\epsilon/(2L)) 
 \cap B(\vec{r},r)
 \big)$. 
The result then follows by applying Theorem \ref{thm:conservative_finite_sample}.

Specifically, we must prove that 
$\Prob_\X(B(x,\epsilon/(2L)))\geq \Lambda_{\epsilon}^{L}$ for all  
	$x\in\partial\X$. 
	
	From Assumption \ref{assum:sampling_density:cor}, 
	$\Prob_\X(A)\geq p_0\lambda(A)$ for all  
 	$A\in\B(\X)$ for some constant $p_0>0$.

Since $\Prob_\X(\X)=1$ (Assumption \ref{assum:sampling_density:cor}), 
	$\Prob_\X(B(x,\epsilon/(2L)))
	=
	\Prob_\X(\X\cap B(x,\epsilon/(2L)))
	$ for any  
	$x\in\R^n$.

	Therefore, for all  
	$x\in\partial\X$, 
	$\Prob_\X(B(x,\epsilon/(2L)))
	=
	\Prob_\X(\X\cap B(x,\epsilon/(2L)))
	\geq 
	p_0 \lambda(\X\cap B(x,\epsilon/(2L))).$
	
	From Assumption \ref{assum:Theta:r_convex} and Lemma \ref{lem:X_vol_covrg_eps_ball:improved_caps}, we obtain  
	$\Prob_\X(B(x,\epsilon/(2L)))
	\geq 
	p_0 \lambda\big(
B(0,\epsilon/(2L)) 
\cap B(\vec{r},r)
\big)$  for all  
	$x\in\partial\X$ with $\vec{r}=(r,0,\dots,0)\in\R^p$. 
	This concludes the proof of Corollary \ref{cor:conservative_finite_sample}.
\end{proof}

\section{Volume of the intersection of two hyperspheres}
\label{appendix:spherical_caps}\label{apdx:intersection_spherical_cap}
For completeness, we describe the computation of the constant $\Lambda_{\epsilon}^{r,L}=\lambda\big(
B(0,\epsilon/(2L)) 
\cap B(\vec{r},r)
\big)$ in Theorem \ref{thm:conservative_finite_sample} based on the results in \citep{Li2011,Petitjean2013,MattStackExchange2013}. 
Given $r>0$, $a\in\R$, 
and the incomplete beta function 
$I_x(a,b)=\Gamma(a+b)(\Gamma(a)\Gamma(b))^{-1}\int_0^xt^{a-1}(1-t)^{b-1}\dd t$, we define
$$V(r,a)=\begin{cases}
\frac{\pi^{p/2}}{2\Gamma(\frac{p}{2}+1)}r^nI_{1-a^2/r^2}\left(\frac{n+1}{2},\frac{1}{2}\right)\ &\text{if }a\geq 0,
\\
\frac{\pi^{p/2}}{2\Gamma(\frac{p}{2}+1)}r^n(2-I_{1-a^2/r^2}\left(\frac{n+1}{2},\frac{1}{2}\right))
\ &\text{otherwise.}
\end{cases}
$$
Let 
$c_1=\frac{(\epsilon/(2L))^2}{2r}$ and 
$c_1=\frac{2r^2-(\epsilon/(2L))^2}{2r}$. 
Then, 
$
\Lambda_{\epsilon}^{r,L}
=
V(\epsilon/(2L),c_1) 
+
V(r,c_2)$.

\label{appendix:spherical_caps}
\section{Computing the Lipschitz constant of a ReLU network from samples}\label{appendix:lipschitz_relu}
In this section, we show that sampling gradients enables obtaining the Lipschitz constant of a neural network with ReLU activation functions with high probability. 
Consider a feed-forward ReLU neural network $f:\R^n\rightarrow\R^n$ with $\ell\in\mathbb{N}$ layers, given as 
$$
f(x) = W^\ell x^\ell + b^\ell,
\quad x^{k+1}=\phi^k(W^kx^k+b^k), \ k=0,\dots,\ell-1,
\quad 
x^0 = x,
$$
where $(W^k,b^k)_{k=0}^\ell$ are the network weights and biases, and each $\phi^k:\R^{n_k}\rightarrow\R^{n_{k+1}}$ is defined as 
$\phi^k(x)=(\varphi(x_1),\dots,\varphi(x_{n_k}))
$ with $\varphi(z)=\max(0,z)$. 
Note that $f$ is piecewise-affine. 

Let $\X\subset\R^n$ be a non-empty compact set. 
Let $\A=\{A_1,\dots,A_N\}$ be the set of all polytopes $A_i\subseteq\X$ where $f|_{A_i}$ is affine, which we call the activation regions of $f$.  
Let $\Lambda_N=\frac{\min_{i=1,\dots,N} \lambda(A_i)}{\lambda(\X)}$ be the smallest (normalized Lebesgue) volume of all activation regions. 
Note that $f$ is Lipschitz continuous over $\X$, since it is continuous and restricted to a compact subset. Thus, for some $L\geq 0$,
$$
\|f(x_1)-f(x_2)\|\leq L\|x_1-x_2\|\quad \forall x_1,x_2\in\X.
$$
Further, since  $f$ is piecewise-affine, $f$ is $L$-Lipschitz continuous with 
$$
L=\max_{i=1,\dots,N} \{\|\nabla f(x_i)\|\ \text{for some }x_i\in A_i\}.
$$
We propose the following sampling-based method to recover the Lipschitz constant $L$:
\begin{enumerate}\setlength\itemsep{0.5mm}
\item Draw $M$ random samples $x_i$ in $\X$ according to the uniform probability measure over $\X$.
\item Evaluate $L_i=\|\nabla f(x_i)\|$ for all $i=1,\mydots,M$.
\item Set $\hat{L}=\max_{i=1,\dots,M} L_i$.
\end{enumerate}
In general, with this approach, providing statistical guarantees on whether $\hat{L}$ is a valid Lipschitz constant for $f$ is challenging; the analysis would rely on the Hessian of $f$ which is a-priori unknown. 
In this specific setting, $f$ is piecewise-affine, which we leverage in the analysis below.
\begin{lemma}
With the previous notations, define 
$\delta_M=N(1-\Lambda_N)^M
$. 
Then,
$$
\Prob(f\text{ is $\hat{L}$-Lipschitz continuous over $\X$})\geq 1-\delta_M.
$$
\end{lemma}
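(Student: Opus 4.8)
The plan is to reduce the statement to a coupon-collector-style union bound over the finitely many activation regions. First I would record the reduction: with probability one each sample $x_i$ lies in the interior of some activation region $A_{j(i)}$ on which $f$ is affine, so $L_i = \|\nabla f(x_i)\|$ equals the norm of the constant Jacobian $G_{j(i)}$ of that piece; hence $\hat L = \max_{i} L_i \le \max_{k} \|G_k\| = L$ almost surely, where $L$ is exactly the Lipschitz constant characterized in the excerpt ($L = \max_k\{\|\nabla f(x)\| : x \in A_k\}$). Since $f$ is already $L$-Lipschitz on $\X$, the event $\{\hat L \ge L\}$ forces $f$ to be $\hat L$-Lipschitz on $\X$; combined with $\hat L \le L$, it therefore suffices to lower-bound $\Prob(\hat L = L)$.

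Next I would introduce the event $E$ that every activation region $A_k$ contains at least one of the $M$ samples. On $E$ the maximizing region $A_{k^\star}$ (an index with $\|G_{k^\star}\| = L$) is hit, and its sample lies a.s. in the interior, so $\hat L \ge \|G_{k^\star}\| = L$, giving $\hat L = L$; thus $\Prob(\hat L = L) \ge \Prob(E) = 1 - \Prob(E^c)$. Because the $x_i$ are i.i.d. uniform on $\X$, for every $k$ we have $\Prob(x_i \in A_k) = \lambda(A_k)/\lambda(\X) \ge \Lambda_N$, so $\Prob(\text{no sample in } A_k) = (1 - \lambda(A_k)/\lambda(\X))^M \le (1 - \Lambda_N)^M$. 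A union bound over the $N$ regions yields $\Prob(E^c) \le N(1 - \Lambda_N)^M = \delta_M$, and the claim follows.

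The only genuine obstacle — the step I would be most careful about — is the bookkeeping around $\lambda$-null sets: the shared boundaries between activation regions, where $\nabla f$ need not be uniquely defined, and any lower-dimensional cells in $\A$. Samples avoid these with probability one, so they do not affect the i.i.d. computation; and if some $A_k$ fails to be full-dimensional then $\Lambda_N = 0$, whence $\delta_M \ge N \ge 1$ and the bound is vacuously true, so one may assume without loss of generality that every $A_k$ has $\lambda(A_k) > 0$. Beyond this, the lemma is essentially a packaging of the fact that a ReLU network's piecewise-affine structure makes its exact Lipschitz constant recoverable from finitely many gradient evaluations, with the failure probability governed by the smallest activation-region volume $\Lambda_N$.
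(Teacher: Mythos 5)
Your proposal is correct and follows essentially the same route as the paper: reduce to the event that every activation region receives at least one sample, then apply a union bound over the $N$ regions with $\Prob(x_j\in A_i)\geq\Lambda_N$ and independence to get $\delta_M=N(1-\Lambda_N)^M$. Your extra bookkeeping about null sets on region boundaries and the degenerate case $\Lambda_N=0$ is a welcome refinement that the paper leaves implicit, but it does not change the argument.
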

\begin{proof}
Since $L=\max_{i=1,\dots,N} \{\|\nabla f(x_i)\|\ \text{for some }x_i\in A_i\}$, a sufficient condition for $f$ to be $\hat{L}$-Lipschitz continuous is that at least one point $x_j$ was sampled in each region $A_i$. Thus,
\begin{align*}
\Prob\left(
\bigcap_{i=1}^N\left\{
\bigcup_{j=1}^M \{x_j\}\cap A_i\neq\emptyset
\right\}
\right)
&=
1-
\Prob\left(
\bigcup_{i=1}^N\left\{
\bigcup_{j=1}^M \{x_j\}\cap A_i\neq\emptyset
\right\}^\comp
\right)
\\
&\geq
1-
\sum_{i=1}^N
\Prob\left(
\bigcap_{j=1}^M x_j\notin A_i
\right)
\ \,\quad\text{(Boole's inequality)}
\\
&=
1-
\sum_{i=1}^N
\Prob\left(
 x_j\notin A_i
\right)^M
\qquad\quad\text{(independent samples)}
\\
&\geq 
1-N(1-\Lambda_N)^M,
\end{align*}
where the last step follows from
$\Prob\left(
 x_j\notin A_i
\right)=1-\Prob\left(
 x_j\in A_i
\right)\leq 1-\Lambda_N$. 
\end{proof}

Upper bounds for the number of activation regions $N$, as a function of the number of neurons and layers of the neural network $f$, are available in the literature \citep{Montufar2014,Serra2018,Hanin2019}. 
The regions $A_i$ and their number $N$ could be explicitely computed using formal methods \citep{Serra2018,Vincent2021}. %

\section{Experimental details}
\subsection{Sensitivity analysis}\label{apdx:sensitivity}
We provide more details into the sensitivity analysis in Section \ref{sec:results:sensitivity}.

	\begin{wrapfigure}{R}{0.32\linewidth}
	\begin{minipage}{1\linewidth}
	\vspace{-4mm}
    \centering	\includegraphics[width=1\linewidth]{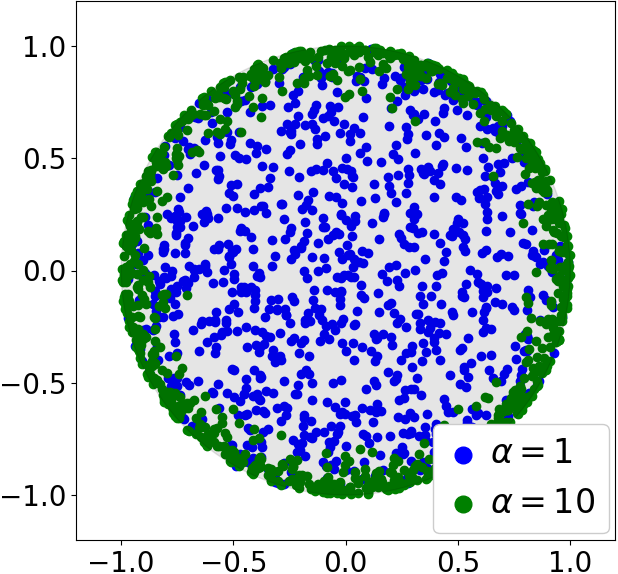}
	\caption{$M=1000$ samples  $x_i$ distributed according to $\Prob_\X^\alpha$ for two values of the parameter $\alpha$.}\label{fig:sensitivity:sampling_distribution}
	\vspace{-16mm}
	\end{minipage}%
	\end{wrapfigure}

\textbf{Sampling distribution}: 
uniformly sampling over a $p$-dimensional ball $\X=B(0,1)$ can be done with the following algorithm (see, e.g., \citep{Harman2010}): %
\begin{enumerate}[leftmargin=5mm]
\item Sample $u\sim\textrm{Unif}(0,1)$ and set the radius $r\gets u^{1/p}$.
\item Sample $z\sim \mathcal{N}(\mathbf{0},\textrm{I}_{p})$ with $\textrm{I}_p\in\R^{p\times p}$ the identity matrix.
\item Set the input sample to $x \gets (r\cdot z)/\|z\|_2$.
\end{enumerate}
Our sampling distribution $\Prob_\X^\alpha$ is a simple modification to this algorithm to yield increasingly larger probabilities of sampling inputs $x$ close to the boundary $\partial\X$. Specifically, we replace the first step above with sampling from a $\beta$-distribution 
$u\sim\textrm{Beta}(\alpha,\beta)$, where $\beta=1$ is fixed and $\alpha\geq 1$ is a varying parameter. Setting $\alpha=1$ yields a uniform distribution (i.e., $\textrm{Beta}(1,1)=\textrm{Unif}(0,1)$) whereas larger values of $\alpha$ yield larger probabilities of sampling close to the boundary $\partial\X$, see Figure \ref{fig:sensitivity:sampling_distribution}.

\textbf{Finite-sample bound}:  	
given a desired $\epsilon$-accuracy, 
evaluating the theoretical coverage probability $1-\delta_M$ from Corollary \ref{cor:conservative_finite_sample} requires evaluating the threshold
 $
 \delta_M= 
 D(\partial\X,\epsilon/(2L))\big(1 -
 	p_0 \Lambda_{\epsilon}^{r,L}
 \big)^M
 $. 
	Since $\X$ is a $2$-dimensional unit-radius ball, the covering term is bounded by $D(\partial\X,\bar\epsilon)\leq (2\pi) / (2\bar\epsilon) + 1$. The volume term $\Lambda_{\epsilon}^{r,L}=\lambda\big(
 B(0,\epsilon/(2L)) 
 \cap B(\vec{r},r)
 \big)$ is computed as described in Section \ref{apdx:intersection_spherical_cap}. 
 The constant $p_0^\alpha>0$ that satisfies
 $\Prob_\X^\alpha(B_x)\geq p_0^\alpha\lambda(B_x)$ for all  
 	$B_x=B(x,\bar\epsilon)\cap\X$ with $x\in\partial\X$ (see Assumption \ref{assum:sampling_density:cor}) can be computed as $p_0^\alpha=(1-\textrm{CDF}_{\textrm{Beta}(\alpha,\beta)}((1-\bar\epsilon)^p)))/(1-\textrm{CDF}_{\textrm{Unif}(0,1)}((1-\bar\epsilon)^p))$. 
Finally, given the fixed threshold $\smash{\delta_M=10^{-3}}$, we compute the corresponding guaranteed accuracy value $\epsilon>0$ that satisfies $
 \delta_M= 
 D(\partial\X,\epsilon/(2L))\smash{(1 -
 	p_0 \Lambda_{\epsilon}^{r,L}
 )^M}
 $ using a bisection method.

\subsection{Verification of neural network controllers}\label{apdx:exps:nn}
We provide further details about the neural network controller experiment in Section \ref{sec:results:verif_closed_nn}. 
In this experiment, we consider the verification of a neural network controller $u_t=\pi_{\textrm{nn}}(x_t)$ for a known linear dynamical system $x_{t+1}=Ax_t+Bu_t$, where $t\in\mathbb{N}$ denotes a time index, and $x_t\in\R^n$ and $u_t\in\R^m$ denote the state and control input. %
Given a set of initial states $\X_0\subset\R^n$, 
the problem consists of estimating the reachable set at time $t\in\mathbb{N}$ defined as
$
\X_t=\{(A(\cdot)+B\pinn(\cdot))\circ\dots\circ 
(Ax_0+B\pi_{\textrm{nn}}(x_0)): x_0\in\X_0\}$. 
Defining $(\X,\Y)=(\X_0,\X_t)$ and $f(x)=(A(\cdot)+B\pinn(\cdot))\circ\dots\circ 
(Ax+B\pi_{\textrm{nn}}(x))$, we see that this problem fits the mathematical form described in Section \ref{sec:intro}. %
In the experiments, we consider $A=\SmallMatrix{
1&1\\0&1
}$, $B=\SmallMatrix{
0.5\\1
}$, %
$
\X_0=\{(x^1,x^2)\in\R^2:
5\leq 2x^1\leq 6, -1\leq 4x^2\leq 1
\}
$, and %
a ReLU network $\pinn$ from \citep{Everett21_journal} with two layers of $5$ neurons each. 
We compare $\epsilon$-\randup with the formal verification technique \reachlp \citep{Everett21_journal} %
and the sampling-based approaches presented in %
\citep{ThorpeL4DC2021} and %
\citep{Gruenbacher2021}. 
We use the Abel kernel $K(x_1,x_2)=\exp(-\|x_1-x_2\|/\scalebox{0.9}{$0.05$})$ for the kernel method \citep{ThorpeL4DC2021} due to its separating property \citep{DeVito2014}. 
To implement \gotube \citep{Gruenbacher2021}, we use the $\epsilon$-\randup algorithm where we replace the last convex hull bounding step with an %
outer-bounding ball. %
We use a uniform sampling distribution for all methods. 
As ground-truth, we use the reachable sets from $\epsilon$-\randup with $\epsilon=0$ and $M=10^6$, which is motivated by the asymptotic results from Theorem \ref{thm:asymptotic_convergence} and was previously done in \citep{Everett21_journal}.

Next, we provide further details into the evaluation of the finite sample bound: given $\epsilon=0.02$, 
sampling $M=1400$ 
inputs that are uniformly-distributed on the boundary $\partial\X$ is sufficient to ensure that the approximated reachable sets from $\epsilon$-\randup are conservative with probability greater than $1-10^{-4}$. 
This result relies on the Lipschitz constant of the closed-loop system, which we set to $L=1$ to evaluate this bound since the neural network controller leads to closed-loop stability. Alternatively, one could use a formal method to compute a bound on this constant (in contrast to using a formal method for reachability analysis, computing this Lipschitz constant only needs to be done once and can be done offline) or sampling-based methods with a large number of samples (see Section \ref{appendix:lipschitz_relu} for an analysis). 
Since the input set is given as $\X=[2.5,3]\times[-0.25,0.25]$, we have $D(\partial\X, \epsilon/(2L))\leq 2\cdot(0.5+0.5)/(2\epsilon/(2L)) + 1=2/\epsilon+1$. 
Finally, since we sample according to a uniform distribution on the boundary and the input set is rectangular, the coverage constant $\Lambda_\epsilon^L$ in Assumption \ref{assum:sampling_density} can be set to $\Lambda_\epsilon^L=(2\cdot(\epsilon/2L)) / (4\cdot0.5)=\epsilon/2$. 
Thus, with $\epsilon=0.02$, 
(which leads to more accurate reachable set approximations than alternative approaches, see Figure \ref{fig:nn_controller:M_vs_dH_time}), from Theorem \ref{thm:conservative_finite_sample}, choosing $M\geq \frac{\log(\delta_M)-\log(D(\partial\X,\epsilon/(2L)))}{\log(1-\Lambda_\epsilon^L)}\approx 1376$ 
is sufficient to be conservative with probability at least $1-10^{-4}$.
\end{document}